\documentclass[11pt,onecolumn]{IEEEtran}
\usepackage[tbtags]{amsmath}
\usepackage{amsfonts}
\usepackage{amssymb}
\usepackage{amsthm}
\usepackage{algorithmicx}
\usepackage{algpseudocode}
\usepackage{algorithm}
\usepackage{subfigure}
\usepackage{graphicx}
\usepackage{cite}
\usepackage{calc}
\usepackage{color}
\usepackage{epsfig}
\usepackage{enumitem}
\usepackage{setspace}
\usepackage{multirow}
\usepackage{hyperref}
\usepackage{mathtools, cuted}
\usepackage{epstopdf}
\usepackage{lipsum}
\usepackage{mathtools}
\usepackage{cuted}
\usepackage[thinlines]{easytable}

\makeatletter
\def\endthebibliography{%
	\def\@noitemerr{\@latex@warning{Empty `thebibliography' environment}}%
	\endlist
}
\makeatother

\makeatletter
\def\footnoterule{\kern-3\p@
	\hrule \@width 2in \kern 2.6\p@} 
\makeatother

\newtheorem{defn}{Definition}
\newtheorem{thm}{{\cal T}heorem}[section]
\newtheorem{cor}[thm]{Corollary}
\newtheorem{prop}{Proposition}
\newtheorem{lem}[thm]{Lemma}
\newtheorem{conj}[thm]{Conjecture}
\newtheorem{constr}[thm]{Construction}
\newtheorem{note}{Remark}
\newcommand{\bit}{\begin{itemize}}
\newcommand{\eit}{\end{itemize}}
\newcommand{\bcor}{\begin{cor}}
\newcommand{\ecor}{\end{cor}}
\newcommand{\beq}{\begin{equation}}
\newcommand{\eeq}{\end{equation}}
\newcommand{\beqn}{\begin{equation}}
\newcommand{\eeqn}{\end{equation}}
\newcommand{\bea}{\begin{eqnarray}}
\newcommand{\eea}{\end{eqnarray}}
\newcommand{\bean}{\begin{eqnarray*}}
\newcommand{\eean}{\end{eqnarray*}}
\newcommand{\ben}{\begin{enumerate}}
\newcommand{\een}{\end{enumerate}}
\newcommand{\bdefn}{\begin{defn}}
\newcommand{\edefn}{\end{defn}}
\newcommand{\bnote}{\begin{note}}
\newcommand{\enote}{\end{note}}
\newcommand{\bprop}{\begin{prop}}
\newcommand{\eprop}{\end{prop}}
\newcommand{\blem}{\begin{lem}}
\newcommand{\elem}{\end{lem}}
\newcommand{\bthm}{\begin{thm}}
\newcommand{\ethm}{\end{thm}}
\newcommand{\bconj}{\begin{conj}}
\newcommand{\econj}{\end{conj}}
\newcommand{\bconstr}{\begin{constr}}
\newcommand{\econstr}{\end{constr}}
\newcommand{\bpf}{\begin{proof}}
\newcommand{\epf}{\end{proof}}
\newcommand{\bprf}{{\em Proof: }}
\newcommand{\eprf}{\hfill $\Box$}

\newcommand{\cxyz}{\mbox{$C(x,y; \underline{z})$}}
\newcommand{\cxyzc}{\mbox{$C(z_y,y; \underline{z}(x \rightarrow z_y))$}}
\newcommand{\cxyznot}{\mbox{$C(x_0,y_0; \underline{z})$}}
\newcommand{\cxyzcnot}{\mbox{$C(x_0,y_0; \underline{z}(x \rightarrow z_{y_0}))$}}
\newcommand{\calc}{\mbox{$\mathcal{C}$}}
\newcommand{\call}{\mbox{$\mathcal{L}$}}
\newcommand{\calf}{\mbox{$\mathcal{F}$}}

\newcommand{\uz}{\mbox{$\underline{z}$}}

\newcommand{\uw}{\mbox{$\underline{w}$}}

\newcommand{\Zs}{\mbox{$[0:s-1]$}}
\newcommand{\Zt}{\mbox{$[0:t-1]$}}
\newcommand{\Zst}{\mbox{$\mathbb{Z}_s^t$}}

\newcommand{\bc}{\begin{center}}
	\newcommand{\ec}{\end{center}}

\newcommand{\fq}{\ensuremath{\mathbb{F}_q}}

\newcommand{\uc}[1]{\ensuremath{\underline{c}_{#1}}}

\newcommand\scalemath[2]{\scalebox{#1}{\mbox{\ensuremath{\displaystyle #2}}}}

\begin{document}
\title{On Lower Bounds on Sub-Packetization Level of MSR codes and On The Structure of Optimal-Access MSR Codes Achieving The Bound}
\author{
\IEEEauthorblockN{S. B. Balaji, Myna Vajha, P. Vijay Kumar} \ \\
\IEEEauthorblockA{
	Department of Electrical Communication Engineering, IISc Bangalore \\ \{balaji.profess, mynaramana, pvk1729\}@gmail.com}
\footnote{This paper was presented in part at IEEE, International Symposium on Information Theory (ISIT) 2018 in \cite{BalKum}.}
}
\maketitle

\begin{abstract}
	We present two lower bounds on sub-packetization level $\alpha$ of MSR codes with parameters $(n, k, d=n-1, \alpha)$ where $n$ is the block length, $k$ dimension, $d$ number of helper nodes contacted during single node repair and $\alpha$ the sub-packetization level. The first bound we present is for any MSR code and is given by $\alpha \ge e^{\frac{(k-1)(r-1)}{2r^2}}$. 
	The second bound we present is for the case of optimal-access MSR codes and the bound is given by  $\alpha \ge \min \{ r^{\frac{n-1}{r}}, r^{k-1}  \}$. There exist optimal-access MSR constructions that achieve the second sub-packetization level bound with an equality making this bound tight.
   
    We also prove that for an optimal-access MSR codes to have optimal sub-packetization level under the constraint that the indices of helper symbols are dependant only on the failed node, it is needed that the support of the parity check matrix is same as the support structure of constructions shown in \cite{VajBalKum,SasAgaKum,SasVajKum,YeBarg_2017, RawKoyVis_msr}.
\end{abstract}

\textbf{\textit{Keywords:}} coding theory, distributed storage, regenerating codes, minimum storage regenerating (MSR) codes, optimal access repair, bounds on sub-packetization level, optimal sub-packetization level codes and structure theorems.

\section{Introduction}
%
Erasure codes are of strong interest in distributed storage systems as they offer reliability at lower values of storage overhead in comparison with replication.   In the setting of distributed storage, each code symbol is stored on a distinct storage unit, such as a hard disk.   Among the class of erasure codes, Maximum Distance Separable (MDS) codes are of particular interest as they offer reliability at lowest possible value of storage overhead.  An $[n,k]$ MDS code with block-length $n$ and dimension $k$ can recover from any $(n-k)$ erasures.

Apart from reliability and storage overhead a third important concern in a distributed storage system is that of efficient node repair.   Efficient node repair could either call for the amount of data download needed to repair a failed node to be kept to a low level or else, the number of helper nodes contacted for repair to be kept small.  Here the focus in on the first criterion, i.e., the amount of data download, which is also termed the repair bandwidth.    

\subsection{Minimum Storage Regenerating (MSR) Codes} Regenerating codes are erasure codes that offer node repair with least possible repair bandwidth for a fixed code alphabet and number of symbols encoded and block length and number of nodes contacted for repair.   Within the class of regenerating codes, the subclass of Minimum Storage Regenerating (MSR) codes are of particular interest, as the rate of an MSR code can be made as close to $1$ as possible.  In addition, MSR codes fall within the class of MDS codes and hence offer the best possible rate given their ability to recover from a fixed number of symbol erasures. An $[n,k]$ MSR code is an $[n,k]$ MDS codes over the vector alphabet $\fq^{\alpha}$ satisfying the additional constraint that a failed node can be repaired by contacting $d$ helper nodes, while downloading $\beta$ symbol over \fq\ from each helper node.  We use $B$ to denote the number $B=k \alpha$ of message symbls over \fq\ encoded by an MSR code.    Thus, MSR codes are characterized by the parameter set 
\bean
\left\{ (n,k,d),\ (\alpha,\beta),\ B, \ \fq \right\}, \text{ where }
\eean
\bit 
\item \fq\ is the underlying finite field, 
\item $n$ is the number of code symbols $\{\underline{c}_i\}_{i=1}^n$ each stored on a distinct node or storage unit and 
\item each code symbol $\underline{c}_i$ is an element of  $\fq^{\alpha}$. 
\eit

It turns out that the number $\beta$ of symbols downloaded from each helper node in an MSR code is given by 
\bean
\beta = \frac{\alpha}{d-k+1}. 
\eean
This is obtained by deriving minimum repair bandwidth $d \beta$ for the repair of a failed node in an MDS code over $\fq^{\alpha}$ for a fixed $n,k,d,\alpha$.
In a distributed storage system, each code symbol \uc{i} is typically stored on a distinct node. Thus the index $i$ of a code symbol is synonymous with the index of the node upon which that code symbol is stored. Throughout this paper, we will focus on a linear MSR code i.e., the encoding is done by:  $[\uc{1}^T,\hdots,\uc{n}^T]^T = G\underline{m}$ where $G$ is an $(n\alpha \times k\alpha)$ generator matrix over \fq\ and $\underline{m}$ is a $(k\alpha \times 1)$ message vector over \fq\ comprising of $B$ message symbols encoded by the MSR code.

\subsection{Desirable Properties of an MSR Code}

While MSR codes are MDS codes with smallest repair bandwidth possible in an MDS code, there is still scope for optimization with the class of MSR codes.  The additional features of interest are listed below.

\bit
\item {\em Optimal-Access:\ }
Optimal-access MSR codes \cite{TamoWangBruck} are a subclass of MSR codes having the property that during repair, the $\beta$ symbols that are transmitted by a helper node during repair, are simply a subset of the $\alpha$ symbols contained in the node.  This has two important and desirable, practical consequences.  Firstly, the number of symbols accessed in the node is. as small as possible and secondly, no computations are required to generate the transmitted repair symbols.  
\item {\em Low Values of Sub-Packetization:\ } When the MSR codes are implemented in systems, the finite field operations are done over an collection of a codewords. Let $J$ be the number of such codewords. Let $F$ be the size of file  that is being stored, then: $J = F / k\alpha.$
During repair $\beta J$ symbols are requested out of the $\alpha J$ symbols. However, they are not always stored sequentially. The minimum number of repaired symbols that are available sequentially is hence $J$ and may result in fragmented reads if $J$ is very small \cite{VajRamPur_Clay}. Therefore, larger the $J$, better is the access speed for these repaired symbols. This in turn results in need for lower value of sub-packetization level $\alpha$.
\item {\em Low Field Size:\ } The need for a low field size is clear since the smaller the size of the finite field, the lesser is the implementation complexity.  
\eit 


\subsection{Linear Repair of MSR Codes}  
Throughout this paper, we will assume linear repair of the failed node.   By linear repair, we mean that the $\beta$ symbols passed on from a helper node $i$ to the replacement of a failed node $j$ are obtained through a linear transformation:
\bean
\uc{i} \rightarrow S_{i \rightarrow j} \uc{i} ,
\eean
where $S_{i \rightarrow j}$ is an $(\beta \times \alpha)$ matrix over \fq.  While the matrix $S_{i \rightarrow j}$ could potentially be a function of which nodes are participating in the repair of failed node $j$, our paper is mostly concerned with the case $d=(n-1)$, in which case, all the remaining $(n-1)$ nodes participate as helper nodes. 

As a result, when $d=n-1$, the input to the replacement of failed node $j$ is the set:
\bean
\left\{   S_{i \rightarrow j} \uc{i}\mid    i \in [n], \ i \neq j             \right\} .
\eean 
By linear repair of a node $j$ we also mean: the code symbol $\uc{j} = f_j \left( \left\{   S_{i \rightarrow j} \uc{i}\mid    i \in [n], \ i \neq j  \right\} \right)$ where $f_j:\fq^{d\beta} \rightarrow \fq^{\alpha}$ is a deterministic linear function.
We refer to an MSR code as an {\em optimal access} MSR code if the matrix $S_{i \rightarrow j}$ has rows picked from the standard basis $\{e_1,\hdots,e_{\alpha}\}$ for $\fq^{\alpha}$., i.e., the symbols over \fq\ downloaded for the repair of a failed node from node $i$ are simply a subset of size $\beta$, of the $\alpha$ components of the vector $\underline{c}_i$. This property of repair is also termed as optimal-access repair as the number of symbols communicated for repair are exactly the symbols that are accessed. 

Since an $[n,k]$ MSR code is an MDS code, it can recover from the failure of any $r=(n-k)$ node failures or erasures.   Equivalently, the $B$ underlying data symbols encoded by the MSR code can be recovered form the contents of any subset of $k$ nodes.  In the terminology of MSR codes, this is called the data reconstruction property.  Apart from data reconstruction, an MSR code is required to be able to repair the $j$ node, by making use of the $d\beta$ symbols $\{S_{i \rightarrow j} \underline{c}_i: j \neq i,i \in [n]\}$ for some $(\beta \times \alpha)$ repair matrices $\{S_{i \rightarrow j}\}_{i \neq j}$. 

When $S_{i \rightarrow j}=S_j$ $,\forall i  \in [n] \setminus \{j\}$, we say repair matrices are independent of helper-node index $i$ (or constant repair matrix case).
Now lets consider the MSR code for any $d(\leq n-1)$. Let, $S^D_{i \rightarrow j}$ be $(\beta \times \alpha)$ the repair matrix where $S^D_{i \rightarrow j} \uc{i}$ is downloaded for the repair of the node $j$ when the helper nodes ($d$ nodes from which data is downloaded for the repair of node $j$) belong to the set $D$. 
As a result, the input to the replacement of failed node $j$ is the set:
\bean
\left\{   S^D_{i \rightarrow j} \uc{i}\mid    i \in D \right\},
\eean 
for helper nodes in the set $D \subseteq [n] \setminus \{j\}$ such that $|D|=d$. Here also, the code symbol 
\bean
\uc{j} = f_{D,j} \left( \left\{   S^D_{i \rightarrow j} \uc{i}\mid    i \in [n], \ i \neq j  \right\} \right)
\eean
where $f_{D,j}:\fq^{d\beta} \rightarrow \fq^{\alpha}$ is a deterministic linear function.
When $S^D_{i \rightarrow j}=S_{i \rightarrow j}$, we say that repair matrices are independent of identity of remaining helper nodes. Similar to $d=n-1$ case, the term {\em optimal access} MSR code for any $d$ means that the rows of $S^D_{i \rightarrow j}$ are picked from standard basis $\{e_1,\hdots,e_{\alpha}\}$ of $\fq^{\alpha}$.  We drop the superscript $D$ in  $S^D_{i \rightarrow j}$ when $d=n-1$.

\subsection{Optimal-Access MDS Codes} The present paper also includes results which are extended to an $[n,k]$ MDS code over the vector alphabet $\mathbb{F}_q^{\alpha}$, having the property that it can repair the failure of any node in a subset of $w (\leq n-1)$ nodes where failure of each particular node, can be carried out using linear operations, by uniformly downloading $\beta$ symbols from a collection of $d$ helper nodes. Linear repair of any node among the $w$ nodes is defined as in MSR code using repair matrices. It can be shown that even here, the minimum amount $d\beta$ of data download needed for repair of a single failed node, is given by $d\beta=d\frac{\alpha}{(d-k+1)}$ (minimum repair bandwidth). Our objective here as well, is on lower bounds on the sub-packetization level $\alpha$ of an MDS code that can carry out repair of any node in a subset of $w$ nodes, $1 \leq w \leq (n-1)$ where each node is repaired (linear repair) by optimal access with minimum repair bandwidth.   We prove a lower bound on $\alpha$ for the case of $d=(n-1)$.  This bound holds for any $w (\leq n-1)$ and is shown to be tight, again by comparing with recent code constructions in the literature. Also provided, are bounds for the case $d<(n-1)$. The $w=n$ case correspond to the MSR code case which is described before. The lower bound on $\alpha$ for $w=n$ case is derived separately as mentioned before and the bound is shown to be tight by comparing with recent code constructions when $(n-k) \nmid (n-1)$. Although the bound is tight when $(n-k) \nmid (n-1)$, the bound is valid for all the parameters (with $d=n-1$).

\subsection{Prior Work on MSR Codes }
Several constructions of MSR code can be found in the literature. In addition, there are constructions of systematic MDS codes in the literature where it is only the systematic nodes that can be recovered with minimal repair bandwidth, i.e., repaired by downloading $d\alpha / (d-k+1)$ symbols. We will refer to this latter class of codes as systematic MSR codes. A detailed survey on the MSR constructions and sub-packetization level bounds can be found in \cite{BalNikVaj}. The product matrix construction in \cite{RasShaKum_pm} for any $2k-2 \le d \le n-1$ is one of the first constructions of an MSR code. In \cite{PapDimCad}, the authors provide a high-rate MSR construction using Hadamard designs for any $(n,k=n-2,d=n-1)$ parameters. In \cite{TamWanBru}, high-rate systematic node repair MSR codes called Zigzag codes were constructed for $d=n-1$. These codes however had large field size and sub-packetization that is exponential in $k$. This construction was extended in \cite{WanTamBru_allerton} to enable the repair of parity nodes. The existence of MSR codes for any value of $(n,k,d)$  as $\alpha$ tends to infinity is shown in \cite{CadJafMalRamSuh}. 
The authors of  \cite{AgaSasKum}, introduced an optimal access systematic MSR construction for the case $d=n-1$ with $\alpha = r^{\frac{k}{r}}$. This was followed by \cite{SasAgaKum}, where the authors come up with optimal access MSR for the case $d=n-1$ with $\alpha = r^{\lceil \frac{n}{r}\rceil}$. The construction in \cite{SasAgaKum} was extended to any $d \le n-1$ in \cite{RawKoyVis_msr} with $\alpha = s^{\lceil \frac{n}{s} \rceil}$. The constructions in \cite{AgaSasKum, SasAgaKum, RawKoyVis_msr} are not explicit and need larger field size. In \cite{YeBar_1} explicit MSR constructions for any $(n, k, d)$ with field size $O(n)$ and $\alpha = s^n$ are proposed. Later in \cite{YeBarg_2017, LiTangTian, SasVajKum} there were three independent discoveries of the Coupled Layer (Clay) MSR code for parameters $(n, k, d=n-1)$ with sub-packetization $\alpha = r^{\lceil \frac{n}{r} \rceil}$. 

An open problem in the literature on regenerating codes is that of determining the smallest value of sub-packetization level $\alpha$ of an MSR code, given the parameters $\{(n,k,d=(n-1)\}$. This question is addressed in \cite{TamWanBru_access}, where a lower bound on $\alpha$ for MSR codes is presented by showing that $k \le \alpha {\alpha \choose \alpha / (n-k)}$. The authors of \cite{TamWanBru_access} have also shown that for the special case of optimal-access MSR codes, the sub-packetization level is lower bounded as
\bean
\alpha \geq r^{\frac{k-1}{r}}.
\eean
In \cite{GopTamCal} it is established that:
\bean
k \leq 2 \log_2(\alpha) (\lfloor \log_{\frac{r}{r-1}} (\alpha) \rfloor + 1),
\eean
while more recently, in \cite{HuangParamXian} the authors prove that:
\bean
k \leq 2 \log_r(\alpha) (\lfloor \log_{\frac{r}{r-1}} (\alpha) \rfloor + 1)
\eean
for MSR codes. Recently, in \cite{OmarGur} the authors prove that $ \alpha \ge e^{\frac{k-1}{4r}}$ for an MSR code.


\subsection{Our Contributions} 
\subsubsection{Improved Lower Bound on Sub-packetization Level for MSR codes}
We present an improved lower bound on the sub-packetization level,  $\alpha$ of an $(n, k, d=n-1)$ MSR code given by:
\bea
\alpha \ge e^{\frac{(k-1)(r-1)}{2r^2}}, \label{eq:subpkt_1}
\eea
by improving upon the lower bound $\alpha \ge e^{\frac{k-1}{4r}}$ presented by the authors in \cite{OmarGur}. The bound we present improves the constant $4$ to $\frac{2r}{(r-1)} \le 4$. We generalise the lemma's presented in \cite{OmarGur} by using the property that dimension of repair subspace intersections reduces $r$-fold by each added intersection. This property was first shown for the constant repair subspace case in \cite{TamWanBru}. We extend it for any repair subspace case in this paper in Lemma~\ref{lem:rsi}.\\

\subsubsection{Improved Lower Bound on Sub-packetization Level for Optimal Access MSR codes}
We also present an improved lower bound for subpacketization level $\alpha$ of an $(n, k, d=n-1)$ optimal-access MSR code given by:
\bea
\alpha \ge \min \{r^{\lceil\frac{n-1}{r}\rceil}, r^{k-1} \} \text{ where } r=n-k. \label{eq:subpkt_2}
\eea
This is an improvement over the lower bound presented in \cite{TamoWangBruck} given by $\alpha \ge r^{\frac{k-1}{r}}$.

\ben
\item The approach we follow to derive the bound for the optimal access MSR codes is along the lines of that adopted in \cite{TamoWangBruck} but with the difference that here we consider non-constant repair subspaces and consider all-node repair. We first derive the bound for the case $d=n-1$ as shown in \eqref{eq:subpkt_2} and extend it to general $d$. We then extend the bounds to optimal MDS code where, given the failed node belongs to a fixed set of $w$ nodes, optimal access repair is possible.
\item A tabular summary of the new lower bounds on sub-packetization-level $\alpha$ derived here for the optimal access case appears in Table~\ref{tab:results}.\\
\een

\begin{table}[h!] \caption{A tabular summary of the new lower bounds on sub-packetization-level $\alpha$ contained in the present paper.   In the table, the number of nodes repaired is a reference to the number of nodes repaired with minimum possible repair bandwidth $d\beta$ and moreover, in help-by-transfer fashion.  An * in the first column indicates that the bound is tight, i.e., that there is a matching construction in the literature.  With respect to the  first entry in the top row, we note that all MSR codes are MDS codes.  \label{tab:results} }
\begin{center}
	\scalebox{0.8}{
	\begin{TAB}(r,1cm,1.5cm)[5pt]{|c|c|c|c|c|c|c|}{|c|c|c|c|c|c|}
		 \shortstack{MSR \\ or MDS \\  Code ?} & $d$ & \shortstack{No. of Nodes \\ Repaired} & \shortstack{Assumption on \\ Repair Matrices \\ $S_{(i,j)}$ } & \shortstack{Lower Bound on $\alpha$ \\ and Constructions achieving our \\ lower bound on $\alpha$} & \shortstack{Reference \\ in this paper} & \shortstack{Previous \\ Known Bound} \\ 
         \shortstack{MSR* } & $n-1$ & $n$ & none & \shortstack{$\alpha \geq  \min \{r^{\lceil \frac{n-1}{r} \rceil},r^{k-1}\}$ \\ Constructions (when $r \nmid (n-1)$): \cite{YeBarg_2017,SasVajKum}} & Theorem \ref{thm:main} & $\alpha \geq r^{\frac{k-1}{r}}$ \\ 
           \shortstack{MSR*}  & $n-1$ & $n$ & \shortstack{independent of \\ helper-node index $i$} & \shortstack{$\alpha \geq  \min \{ r^{\lceil \frac{n}{r} \rceil},r^{k-1}\}$ \\ Constructions: \cite{YeBarg_2017,SasVajKum}} & Corollary \ref{sb_5} & $\alpha \geq r^{\frac{k}{r}}$ \\        
  		      \shortstack{MSR}  & any $d$ & $n$ & \shortstack{independent \\ of identity of \\ remaining \\ helper nodes } &  \shortstack{ $s=d-k+1$ \\ $\alpha \geq  \min \{ s^{\lceil \frac{n-1}{s} \rceil},s^{k-1} \}$} & Corollary \ref{sb_6} & none\\       
      MDS* & $n-1$ & $w$ ($\leq n-1$) & none & \shortstack{$\alpha \geq  \left\{ \begin{array}{rl} \min \{ r^{\lceil \frac{w}{r} \rceil},r^{k-1} \}, \  & w > (k-1)  \\
         r^{\lceil \frac{w}{r} \rceil}, \ & w \leq (k-1) . \end{array} \right. $ \\ Construction: \cite{LiTangTian}} & Corollary \ref{cor:node_subset} & \shortstack{for $w=k$ \\ $\alpha \geq r^{\frac{k-1}{r}}$}  \\MDS & any $d$ & $w$ ($\leq d$) & none &  \shortstack{ $s=d-k+1$ \\ $\alpha \geq  \left\{ \begin{array}{rl} \min \{ s^{\lceil \frac{w}{s} \rceil},s^{k-1}\}, \  & w > (k-1) \\
s^{\lceil \frac{w}{s}\rceil}, \ & w \leq (k-1) . \end{array} \right. $} & Corollary \ref{sb_4} & none \\ 
\end{TAB}}
\end{center}
\end{table}

Comparison of our lower bound on $\alpha$ for optimal-access MSR codes with existing code constructions:
\ben
\item When $r \nmid n-1$, our bound on $\alpha$ (Theorem \ref{thm:main}) for optimal access MSR code with $d=n-1$ becomes:
\bean
\alpha \geq  \min \{r^{\lceil \frac{n}{r} \rceil},r^{k-1}\} 
\eean
For $k \geq 5$, this reduces to $\alpha \geq  r^{\lceil \frac{n}{r} \rceil}$. 
The latter lower bound on $\alpha$ is achievable by the constructions in \cite{YeBarg_2017,SasVajKum, LiTangTian, SasAgaKum}. Hence our lower bound on $\alpha$ is tight.
Although our bound on $\alpha$ is shown to be tight only for $(n-1)$ not a multiple of $(n-k)$, the lower bound is valid for all parameters (with $d=n-1$) and when $r$ divides $(n-1)$.
\item Our bound on $\alpha$ (Corollary \ref{cor:node_subset} ) for MDS code with optimal access repair (repair with minimum repair bandwidth and help-by-transfer repair) for a failed node when it belongs to a fixed set of $w (\leq n-1)$ nodes is:
\bean
\alpha \geq  \left\{ \begin{array}{rl} \min \{ r^{\lceil \frac{w}{r} \rceil},r^{k-1} \}, \  & w > (k-1) \\
	r^{\lceil \frac{w}{r} \rceil}, \ & w \leq (k-1) . \end{array} \right. 
\eean
The above bound for $k \geq 5$ becomes $\alpha \geq r^{\lceil \frac{w}{r}\rceil}$.
This lower bound is achieved by the construction given in \cite{LiTangTian}. Hence our lower bound on $\alpha$ for the repair of $w$ nodes is tight.
\item The constructions in \cite{YeBarg_2017,SasVajKum, SasAgaKum} have repair matrices that are independent of the helper node index $i$ i.e., $S_{i \rightarrow j}=S_j$ and has sub-packetization $\alpha = r^{\lceil \frac{n}{r} \rceil}$ which achieves our lower bound on $\alpha$ (Corollary \ref{sb_5}) under the assumption that $S_{i \rightarrow j}=S_j$ for an optimal access MSR code with $d=n-1$. 
\item Our bound given in Corollary \ref{sb_6} is achieved by construction in \cite{RawKoyVis_msr} when $(d-k+1)$ does not divide $n-1$ under the assumption $S^D_{i \rightarrow j}=S_{i \rightarrow j}$.
\een
Hence our lower bound on $\alpha$ is tight for four cases.

\subsubsection{Structure Theorems on Repair Subspaces and Parity Check Matrix}
We deduce the parity check matrix support structure of an MDS code that allows for optimal-access repair of any single node within a subset of $w$ nodes for the case when $w \le k$ under the conditions that 
\ben
\item[(a)] it has optimal sub-packetization level and, 
\item[(b)] has constant repair subspaces.
\een
The support structure we characterise here is in fact the exact support structure of the MDS codes presented in \cite{LiTangTian} where repair of a single node from a subset of $w \le k$ nodes is possible through optimal-access.


%
%

\subsection{Outline}
In Section~\ref{sec:msr_bound} we present the improved sub-packetization level bound for a general MSR code. We follow this up with an improved bound on $\alpha$ for the optimal access MSR case in Section~\ref{sec:msr_oa_bound}. In Section~\ref{sec:msr_oa_bound} we also present lower bounds on $\alpha$ for MDS codes that can repair any node from the set of $w < n$ nodes optimally. In Section~\ref{sec:struct_thm} we present theorems that would determine the parity check matrix support structure of an optimal-access MSR code that achieves optimal sub-packetization level. In Section~\ref{sec:clay} we show how the resultant support structure is same as the support structure of MDS codes presented in \cite{LiTangTian} that can repair by optimal-access any single node from a subset of $w \le k$ nodes. We briefly explain how the support structure of Clay code constructed in  \cite{SasVajKum, LiTangTian, YeBarg_2017} is similar to the structure presented for the case when $w \le k$ in Section~\ref{sec:supp_msr}.

\subsection{Notation}
We adopt the following notation throughout the paper. 
\ben
\item Given a matrix $A$, we use $<A>$ to refer to the row space of the matrix $A$, 
\item Given a subspace $V$ and a matrix $A$, by $VA$ we will mean the subspace 
$\{ \underline{v} A \mid \underline{v} \in V\}$ obtained through transformation of $V$ by $A$.
\bit
\item Thus for example, $(\bigcap_i <S_i>) A$ will indicate the subspace obtained by transforming the intersection subspace $(\bigcap_i <S_i>)$ through right multiplication by $A$.
\eit
\item Let $\uz = (z_0, z_1, \cdots, z_{t-1}) \in \Zst$ then for $x \in \Zs$ and $y \in \Zt$:
\bean
\uz(x \rightarrow z_y) = (z_0, \cdots, z_{y-1}, x, z_{y+1}, \cdots, z_{t-1})
\eean
 
\een

\section{Improved Lower Bound on the Sub-packetization Level for MSR Codes \label{sec:msr_bound}}

In this section we will present an improved lower on the sub-packetization level of MSR codes. 
%
%
We begin with some helpful notation.  We will use the indices (two disjoint subsets of $[n]$):
\bean
\{ u_1,u_2,\cdots, u_k \}, \ \{p_1,p_2,\cdots,p_r\}
\eean
to denote the $n$ nodes in the network over which the code symbols are stored (Note that the code symbol \uc{i} is stored in node $i$.).  Let $\ell, \ 2 \leq \ell \leq (k-1)$, be an integer and set
\bean
U & = & \{ u_1,u_2,\cdots, u_{\ell} \}, \\
V & = & \{ u_{\ell+1},u_{z+2},\cdots, u_k \}, \\
P & = & \{p_1,p_2,\cdots,p_r\} .
\eean
Note that our choice of $\ell$ ensures that neither $U$ nor $V$ is empty.  We will now present the Lemma~\ref{lem:rsi} on the dimension of intersection of repair subspaces. This property will be used in coming up with improved lower bounds for general MSR codes and optimal-access MSR codes later. The proof of this lemma uses the ideas presented in \cite{TamoWangBruck} where they present similar bound for dimension of intersection of repair subspaces but for the specific case of constant repair subspaces.
%

\begin{lem} \label{lem:rsi} \textbf{(Repair Subspace Intersection)}: 
    $\dim \left( \bigcap_{u \in U} <S_{p \rightarrow u} > \right)$ is the same for all $p \in P$ and
	\bea \label{eq:lemma_2}
	\sum_{p' \in P} \dim \left( \bigcap_{u \in U} < S_{p' \rightarrow u} > \right)  \ \leq \ 
	\dim \left( \bigcap_{u \in U \setminus \{u_{\ell}\}} <S_{p \rightarrow u}> \right), 
	\eea
	where $p$ is an arbitrary node in  $P$. 
\end{lem}	
\begin{proof} \ 
	
	\paragraph{Invariance of $\ell$-fold Intersection of Repair Subspaces Contributed by a Parity Node}  Let us consider the nodes in $U \cup V$ as systematic nodes and nodes in $P$ as parity nodes. Note that the sets $U,V,P$ are pairwise disjoint and are arbitrary subsets of $[n]$, under the size restrictions $2 \leq |U|=\ell \leq k-1,|P|=r$ and $U \cup V \cup P = [n]$. First we prove that $\dim \left(\bigcap_{u \in U} <S_{p \rightarrow u}> \right)$ is the same for all $p \in P$. Note that $< S_{p \rightarrow u} >$ is the row space of the repair matrix carrying repair information from helper (parity) node $p$ to the replacement of the failed node $u$.  Thus we are seeking to prove that the $\ell$-fold intersection of the $\ell$ subspaces $\{ <S_{p \rightarrow u}> \}_{u \in U}$ obtained by varying the failed node $u \in U$ is the same, regardless of the parity node $p \in P$ from which the helper data originates.   
	
	To show this, consider a generator matrix $G$ for the code $\mathcal{C}$ in which the nodes of $P$ are the parity nodes and the nodes in $U \cup V \ = \ \{u_1,u_2,\cdots,u_k\}$ are the systematic nodes.  Then $G$ will take on the form: 	
	\bea
	G = \left[
	\scalemath{1}{
		\begin{array}{cccc}
			I_{\alpha} & 0 & \hdots & 0 \\
			0 & I_{\alpha} & \hdots & 0 \\
			\vdots & \vdots & \vdots & \vdots \\
			0 & 0 & \hdots & I_{\alpha} \\
			A_{p_1,u_1} & A_{p_1,u_2} & \hdots & A_{p_1,u_k} \\
			A_{p_2,u_1} & A_{p_2,u_2} & \hdots & A_{p_2,u_k} \\
			\vdots & \vdots & \vdots & \vdots \\
			A_{p_r,u_1} & A_{p_r,u_2} & \hdots & A_{p_r,u_k} \\
		\end{array} 
	}
	\right]. \ \ \label{Gform_3}
	\eea
	Wolog the generator matrix assumes an ordering of the nodes in which the first $k$ nodes in $G$ (node $i$ in $G$ correspond to rows $[(i-1)\alpha+1,i\alpha]$) correspond respectively to $\{u_i \mid 1 \leq i \leq k\}$ and the remaining $r$ nodes in $G$ to $\{p_1,\cdots,p_r\}$ in the same order. 
	A codeword is formed by $G\underline{m}$ where $\underline{m}$ is the vector of $k \alpha$ message symbols over \fq\ encoded by the MSR code and the index of code symbols in the codeword is according to the nodes in $U,V,P$ i.e., the codeword will be of the form $[\uc{u_1}^T,\hdots,\uc{u_k}^T,\uc{p_1}^T,\hdots,\uc{p_r}^T]^T = G \underline{m}$.
	
	\begin{center}
		\begin{figure}
			\begin{center}
				\includegraphics[width=2.5in]{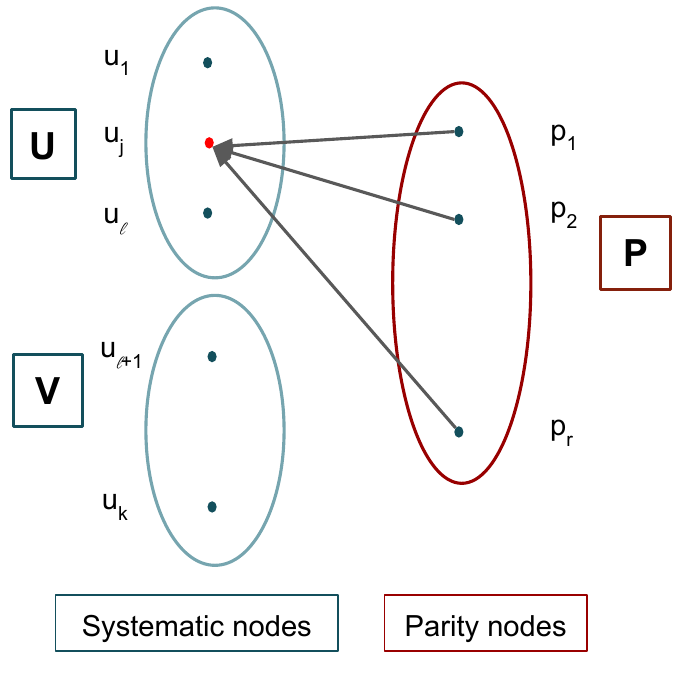}
			\end{center} 
			\caption{The general setting considered here where helper data flows from the parity-nodes $\{p_i\}_{i=1}^r$ forming set $P$ to a failed node $u_j \in U$. \label{fig:setting}}
		\end{figure}
	\end{center}

	By the interference-alignment conditions~\cite{ShahRashKumKann_ia}, \cite{TamoWangBruck} applied to the repair of a systematic node $u_j \in U$, we obtain 	(see Lemma \ref{lem:rank_cond_p_node} in the appendix, for a more complete discussion on interference alignment equations given below): 
	\bea
	<S_{p \rightarrow u_j} A_{p,u_{\ell+1}}>  & = &  <S_{p' \rightarrow u_j} A_{p',u_{\ell+1}}>, \ \text{ for every pair } p, p' \in P. \label{eq:IA_P_1}
	\eea	 
	
	Equation \eqref{eq:IA_P_1} and Lemma \ref{Row_Spaces} implies (as $A_{i,j}$ are invertible for all $i,j$):
	\bea
	\left ( \bigcap_{j=1}^{\ell} <S_{p \rightarrow u_j}> \right )  A_{p,u_{\ell+1}} = \left ( \bigcap_{j=1}^{\ell} <S_{p' \rightarrow u_j}> \right)  A_{p',u_{\ell+1}},\text{ for every pair } p, p' \in P . \label{eq:l_fold}
	\eea
	
	It follows then from the non-singularity of the matrices $A_{i,j}$ and equation \eqref{eq:l_fold},  that $\text{dim}(\bigcap_{u \in U} <S_{(p,u)}>)$ is the same for all $p \in P$.  It remains to prove the main inequality \eqref{eq:lemma_2}.    		
	
	\paragraph{$(\ell-1)$-fold Intersection of Repair Subspaces} 
	We proceed similarly in the case of an $(\ell-1)$-fold intersection, replacing $\ell$ by $\ell-1$ in \eqref{eq:l_fold}.
	We will then obtain: 
	\bea
	\left ( \bigcap_{j=1}^{\ell-1} <S_{p \rightarrow u_j}> \right )  A_{p,u_{\ell+1}} = \left ( \bigcap_{j=1}^{\ell-1} <S_{p' \rightarrow u_j}> \right)  A_{p',u_{\ell+1}},\text{ for every pair } p, p' \in P . \label{eq:l_minus_one}
	\eea
	\paragraph{Relating $\ell$-fold and $(\ell-1)$-fold intersections}	Next consider the repair of the node $u_{\ell}$. 
	It follows from \eqref{eq:l_minus_one} that for any $p', p \in P$:
	\bea \nonumber 
	\left ( \bigcap_{j=1}^{\ell} <S_{p \rightarrow u_j}> \right ) A_{p,u_{\ell}} 	& = & 
	<S_{p \rightarrow u_{\ell}} A_{p,u_{\ell}}> \bigcap \left ( \left ( \bigcap_{j=1}^{\ell-1} <S_{p \rightarrow u_j}> \right )  A_{p,u_{\ell}} \right )\\ 
	& \subseteq & \left ( \bigcap_{j=1}^{\ell-1} <S_{p' \rightarrow u_j}> \right )  A_{p',u_{\ell}}.  			\label{eq:l_to_lminus1}
	\eea
	As a consequence of \eqref{eq:l_to_lminus1} it follows that:
	\bea
	\label{eq:subset} \bigoplus_{i = 1}^r \left ( \bigcap_{u \in U} <S_{p_i \rightarrow u}> \right ) A_{p_i,u_{\ell}}
	& \subseteq & \left (  \bigcap_{j=1}^{\ell-1} <S_{p \rightarrow u_j}> \right ) A_{p,u_{\ell}} \text{ for any } p \in P.
	\eea			
	From the full-rank condition of node repair, (see Lemma \ref{lem:rank_cond_p_node}), we must have that 			
	\bea
	\text{rank} \left( \left[
	\scalemath{1}{
		\begin{array}{c}
			S_{p_1 \rightarrow u_{\ell}} A_{p_1,u_{\ell}} \\
			S_{p_2 \rightarrow u_{\ell}} A_{p_2,u_{\ell}} \\
			S_{p_3 \rightarrow u_{\ell}} A_{p_3,u_{\ell}} \\
			\vdots \\
			S_{p_r \rightarrow u_{\ell}} A_{p_r,u_{\ell}} \\
		\end{array} 
	}
	\right]\right) & = & \alpha. \ \ \label{Cond_1}
	\eea
	It follows as a consequence, that 
	\bea
	\bigoplus_{i = 1}^r < S_{p_i \rightarrow u_{\ell}} A_{p_i,u_{\ell}} > & = & \mathbb{F}_q^{\alpha},  \label{eq:stratify} 
	\eea 
	and hence, for every $p \in \ P$, we must have that 
	\bea
	< S_{p \rightarrow u_{\ell}} A_{p,u_{\ell}} >  \bigcap \ 		\bigoplus_{p' \in [P], p' \neq p  }  <S_{p' \rightarrow u_{\ell}} A_{p',u_{\ell}} > & = & \{ \underline{0} \} .  \label{eq:sum_disjoint} 
	\eea 			
	
	Since the $A_{i,j}$'s are nonsingular, from equations \eqref{eq:subset} and \eqref{eq:sum_disjoint} we can conclude that:
	\bea
	\sum_{i = 1}^r \dim \left ( \bigcap_{u \in U} <S_{ p_i \rightarrow u}> \right ) \ \leq \ 
	\dim \left ( \bigcap_{u \in U \setminus \{u_{\ell}\}} <S_{p \rightarrow u}> \right ). \label{eq:lemma22}
	\eea
	for any $p \in P$, which is precisely the desired equation \eqref{eq:lemma_2}. 
\end{proof} 

\begin{cor} \label{cor:rsi}
	\bea \label{eq:rsi_cor}
	\dim \left( \bigcap_{u \in U} < S_{p \rightarrow u} > \right)  \ \leq \ 
	\frac{1}{r}\dim \left( \bigcap_{u \in U \setminus \{u_{\ell}\}} <S_{p' \rightarrow u}> \right), 
	\eea	
	where $p, p'$ are arbitrary nodes in  $P$.   
\end{cor}
\bprf
The proof follows from Lemma~\ref{lem:rsi} as $	\dim \left( \bigcap_{u \in U} < S_{p \rightarrow u} > \right) = 	\dim \left( \bigcap_{u \in U} < S_{p' \rightarrow u} > \right)$ for any $p, p' \in P$.
\eprf

\begin{thm} \label{thm:main_gen} \textbf{(Subpacketization Bound)}: 
	Let $\mathcal{C}$ be a linear MSR code having parameter set 
	\bean
	\left\{ (n,k,d),\ (\alpha,\beta),\ B, \ \fq) \right\},
	\eean
	with $d=(n-1)$ and linear repair for all $n$ nodes.  Then we must have:
	\bea
	\alpha \geq  e^{\frac{(r-1)(k-1)}{2r^2}} \label{eq:Main_Bound_Gen}.
	\eea
\end{thm}
Before presenting the proof of this Theorem, we introduce some notation and lemmas presented in \cite{OmarGur} that will help us prove the improved bound. We start by restating the definition of MSR subspace family introduced in \cite{OmarGur} and restate the Lemma~\ref{lem:ssfamily_exists} presented in \cite{OmarGur} that states that for every MSR code there exists an MSR subspace family.

\begin{defn}[MSR Subspace Family] The set of sub spaces $\{H_1, H_2, \cdots, H_t\}$ is called an $(\alpha, r)_{\fq}$ subspace family if dim$(H_i) = \frac{\alpha}{r}$ for all $i \in [t]$ and if there exist invertible linear maps $\phi_{i, j}$ on $\fq^{\alpha}$ for $i \in [t]$ and $j \in [r-1]$ such that:
\bean
H_i \oplus \oplus_{j=1}^{r-1} \phi_{i, j}(H_i) &=& \fq^{\alpha} \text{ for all } i \in [t],\\
\phi_{i', j} (H_i) &=& H_i \text{ for all } i \ne i' \in [t], j \in [r-1].
\eean
\end{defn}

\blem[Theorem 2 in \cite{TamWanBru_access}] \label{lem:constss_tamo}
If there exists an MSR code with parameters $(n, k, d=n-1, \alpha)$ and repair matrices $S_{i \rightarrow u_j}$ for $j \in [k]$, $i \in [n]\setminus \{u_j\}$ then there exists an MSR code that has constant repair matrices ( $S_{u_j} = S_{p_r \rightarrow u_j}$ for $j \in [k-1]$) with parameters $(n-1, k-1, d=n-2, \alpha )$.
\elem

\blem[Proposition 2 in \cite{OmarGur}] \label{lem:ssfamily_exists} 
Suppose there exists an MSR code with parameters $(n, k, d=n-1, \alpha)$ over $\fq$ with repair matrices $S_{i \rightarrow u_j}$ for $j \in [k], i \in [n] \setminus \{u_j\}$. Then there exists an $(\alpha, r)_{\fq}$-MSR subspace family with $(k-1)$ subspaces $H_1, \cdots, H_{k-1}$ given by $H_i = <S_{p_r \rightarrow u_i}>$.
\elem


\bcor\label{cor:msrfam_int}
Let $\{H_1, \cdots, H_{k-1}\}$ be an $(\alpha, r)_{\fq}$ MSR subspace family defined from an $(n, k, d=n-1, \alpha)$ MSR code as in Lemma~\ref{lem:ssfamily_exists}. Then:
\bean
\text{dim}( \cap_{i \in S} H_i) \le \frac{\alpha}{r^{|S|}} \text{ where } S \subseteq [k-1].
\eean
\ecor
\bprf
By Lemma~\ref{lem:ssfamily_exists}, $H_i = < S_{p_r \rightarrow u_i}>$ for all $i \in [k-1]$. Therefore by repeated application of Corollary~\ref{cor:rsi} it follows that:
\bean
dim(\cap_{i \in S} H_i) \le \frac{\alpha}{r^{|S|}},
\eean
as dim$(H_i) = \frac{\alpha}{r}$ for any $i \in [k-1]$.
\eprf

In order to prove the lower bound on sub-packetization level the authors of \cite{OmarGur} look at the dimension of a vector space of linear functions that satisfy certain properties and come up with upper bound for it. In this paper we improve the upper bound by using the upper bound on the dimension of intersection of repair subspaces derived in Corollary~\ref{cor:msrfam_int}. We will now start by introducing the notation for subspace of linear functions whose dimension we would like to upper bound.

Let $\call (\fq^{\alpha}, \fq^{\alpha})$ be the vector space of all linear maps from $\fq^{\alpha}$ to $\fq^{\alpha}$, then the subspace:
\bean
\calf (A_1 \rightarrow B_1, A_2 \rightarrow B_2, \cdots, A_n \rightarrow B_n) &=& \{ \psi \in \call(\fq^{\alpha}, \fq^{\alpha}) \mid \psi(A_i) \subseteq B_i, \ \forall i \in [n]\}\\
I (A_1 \rightarrow B_1, A_2 \rightarrow B_2, \cdots, A_n \rightarrow B_n) &=& dim(\calf (A_1 \rightarrow B_1, A_2 \rightarrow B_2, \cdots, A_n \rightarrow B_n)),
\eean
and $I(A_1, \cdots, A_n) = I (A_1 \rightarrow A_1, A_2 \rightarrow A_2, \cdots, A_n \rightarrow A_n)$, $I(A_1, \cdots, A_{n-1}, A_n \rightarrow B_n) = I (A_1 \rightarrow A_1,  \cdots, A_{n-1} \rightarrow A_{n-1}, A_n \rightarrow B_n)$. The lower bound on the sub-packetization level $\alpha$ is presented by coming up with an upper bound on $I(H_1, \cdots, H_{k-1})$ where $\{H_1, \cdots, H_{k-1} \}$ is an $(r, \alpha)_{\fq}$ subspace family in Lemma~\ref{lem:subspace_dim}. We will first present an upper bound on sum of dimension of sub-spaces in Lemma~\ref{lem:sumdim}. This bound generalizes Lemma~4 presented in \cite{OmarGur} where the subspaces $A_1, \cdots, A_n$ were considered to satisfy $\cap_{i=1}^n A_i = \phi$. The generalization of this bound will help in coming up with an improved lower bound on the sub-packetization level presented in Theorem~\ref{thm:main_gen}.

%

\blem \label{lem:sumdim}
Let $A_1, \cdots, A_n$ be $n$ sub spaces of $\fq^{\alpha}$ then:
\bean
\label{eq:sumdim}
\sum\limits_{i=1}^n dim(A_i) \le (n-1) dim(\sum_{i=1}^n A_i) + dim (\cap_{i=1}^n A_i)
\eean
\elem
\bprf
For $n= 2$, this statement follows as 
\bean
dim(A_1) + dim(A_2) = dim(A_1 + A_2) + dim(A_1 \cap A_2).
\eean
We will show by induction that it is true for any $n \ge 2$. Let us assume that equation~\eqref{eq:sumdim} is true for $n=k$, then we will show that it is true for $n=k+1$.
\bean
dim(A_1) + dim(A_2) + \sum\limits_{i=3}^{k+1} dim(A_i) & = &  dim(A_1 + A_2) + \sum\limits_{i=3}^{k+1} dim(A_i) + dim(A_1 \cap A_2)\\
& \le & dim(A_1+A_2) + (k-1)dim(\sum\limits_{i=3}^{k+1} A_i + (A_1 \cap A_2)) + dim(\cap_{i=1}^{k+1} A_i)\\
& \le & k dim(\sum\limits_{i=1}^{k+1} A_i) +  dim(\cap_{i=1}^{k+1} A_i).
\eean
\eprf

We will now use Lemma~\ref{lem:sumdim} to come up with a recursive upper bound on $I(H_1, \cdots, H_t)$ in the following Lemma.

\blem \label{lem:subspace_rec}
Let $H_1, H_2, \cdots , H_k$ be an $(\alpha, r)_{\fq}$ MSR subspace family, then for any $1 \le t \le k$ and $1 \le p \le k-t+1$:
\bean
r I(H_1, H_2, \cdots, H_t ) &\le&  (r-1)I(H_1, H_2, \cdots, H_{t-1} ) + I(H_1, H_2, \cdots, H_{t-1}, \fq^{\alpha} \rightarrow H_t)\\
r I(H_1, H_2, \cdots, H_{t-1}, \fq^{\alpha} \rightarrow \cap_{i=1}^{p} H_{t-1+i}) &\le&  (r-1)I(H_1, H_2, \cdots, H_{t-2}, \fq^{\alpha} \rightarrow \cap_{i=1}^{p} H_{t-1+i} )  \\ && \ \ \ + I(H_1, H_2, \cdots, H_{t-2}, \fq^{\alpha} \rightarrow \cap_{i=0}^{p} H_{t-1+i}).
\eean
\elem
\bprf Let $\phi_{t, 0}$ be an invertible linear map such that $\phi_{t, 0}(H_t) = H_t$. We prove this lemma by using the bijection between linear maps in $\calf_j = \calf (H_1 \rightarrow H_1, \cdots, H_{t-1} \rightarrow H_{t-1}, \phi_{t, j} (H_t) \rightarrow H_t)$ and linear maps in $\calf' = \calf (H_1 \rightarrow H_1, \cdots, H_{t-1} \rightarrow H_{t-1}, H_t \rightarrow H_t)$. For a linear map in $\psi \in \calf_j$ it can be verified that  $\psi \circ \phi_{t, j} \in \calf'$ and for any $\psi' \in \calf'$, $ \psi' \circ \phi_{t, j}^{-1} \in \calf_j$. Therefore $I(H_1, H_2, \cdots, H_t) = I(H_1, \cdots, H_{t-1}, \phi_{t, j}(H_t) \rightarrow H_t)$.

\bea
\nonumber r I(H_1, H_2, \cdots, H_t )  &=& \sum\limits_{j=0}^{r-1} I \left(H_1, H_2, \cdots, \phi_{t, j}(H_t) \rightarrow H_t\right)\\
\nonumber &\le& (r-1) dim\left(\oplus_{j=0}^{r-1} \calf(H_1, \cdots, H_{t-1}, \phi_{t, j}(H_t) \rightarrow H_t)\right)   + \\
\nonumber && \ \ \ dim\left(\cap_{j=0}^{r-1} \calf(H_1, \cdots, H_{t-1}, \phi_{t, j}(H_t) \rightarrow H_t)\right) \text{ from Lemma~\ref{lem:sumdim}}\\
\label{eq:ss_iter} &\le& (r-1) I\left(H_1, \cdots, H_{t-1}\right) + I\left(H_1, \cdots, H_{t-1}, \fq^{\alpha} \rightarrow H_t\right).
\eea
The equation~\eqref{eq:ss_iter} above follows as $\oplus_{j=0}^{r-1} \phi_{t, j}(H_t) = \fq^{\alpha}$. Similarly,
\bean
\nonumber r I(H_1, \cdots, H_{t-2}, H_{t-1}, \fq^{\alpha} \rightarrow \cap_{i=1}^{p} H_{t-1+i} )  &=& \sum\limits_{j=0}^{r-1} I \left(H_1, \cdots, H_{t-2}, \phi_{t-1, j}(H_{t-1}) \rightarrow H_{t-1}, \fq^{\alpha} \rightarrow \cap_{i=1}^{p} H_{t-1+i} \right)\\
&\le& (r-1) I\left(H_1, \cdots, H_{t-2}, \fq^{\alpha} \rightarrow \cap_{i=1}^{p} H_{t-1+i}\right) \\
&& \ \ \ \ + I\left(H_1, \cdots, H_{t-2}, \fq^{\alpha} \rightarrow H_{t-1} \cap \cap_{i=1}^{p} H_{t-1+i} \right).
\eean
\eprf
\blem \label{lem:subspace_dim} Let $H_1, H_2, \cdots , H_k$ be an $(\alpha, r)_{\fq}$ MSR subspace family, then for any $1 \le t \le k$ and $1 \le p \le k-t$:
\bea
\label{eq:subspace_dim}I(H_1, \cdots, H_{t}) &\le& \left(1 - \frac{1}{r} + \frac{1}{r^2}\right)^{t} \alpha^2\\
\label{eq:subspace_dim2}I(H_1, \cdots, H_{t}, \fq^{\alpha} \rightarrow \cap_{i=1}^{p} H_{t+i} ) &\le& \frac{1}{r^{p}}  \left(1 - \frac{1}{r} + \frac{1}{r^2}\right)^{t} \alpha^2 
\eea
\elem
\bprf
We will prove this by induction. We will first show that equations \eqref{eq:subspace_dim} and \eqref{eq:subspace_dim2} hold for $t=1$. From Lemma~\ref{lem:subspace_rec} it follows that:
\bean
rI(H_1) &\le& (r-1) \alpha^2 + I(\fq^{\alpha} \rightarrow H_1)\\
I(H_1) &\le& \left(1 - \frac{1}{r} + \frac{1}{r^2}\right) \alpha^2,
\eean
as the space of all linear maps in $\fq^{\alpha}$ has dimension $\alpha^2$ and $I(\fq^{\alpha} \rightarrow H_1) \le \frac{\alpha^2}{r}$. Similarly:
\bean
I\left(H_1, \fq^{\alpha} \rightarrow \cap_{i=1}^p H_{i+1}\right) &\le& \frac{r-1}{r} I(\fq^{\alpha} \rightarrow  \cap_{i=1}^p H_{i+1}) + \frac{1}{r} I(\fq^{\alpha} \rightarrow \cap_{i=0}^{p} H_{i+1}) \text{ due to Lemma~\ref{lem:subspace_rec}}\\
&\le& \frac{r-1}{r} \left( \frac{\alpha^2}{r^p} \right) + \frac{1}{r} \left( \frac{\alpha^2}{r^{p+1}} \right) = \frac{1}{r^p}\left(1 - \frac{1}{r} + \frac{1}{r^2}\right)\alpha^2,
\eean
as dim$(\cap_{i=1}^{p} H_{i+1}) \le \frac{\alpha}{r^p}$ it follows that $I(\fq^{\alpha} \rightarrow \cap_{i=1}^{p} H_{i+1}) \le \frac{\alpha^2}{r^p}$. Let us assume that equation \eqref{eq:subspace_dim}, \eqref{eq:subspace_dim2} hold for $t=m-1$, we will first show that \eqref{eq:subspace_dim} holds for $t=m$ and then show that \eqref{eq:subspace_dim2} holds for $t=m$. By Lemma~\ref{lem:subspace_rec} we get:
\bean
I(H_1, \cdots, H_{m}) &\le& \left(1-\frac{1}{r}\right) I(H_1, \cdots, H_{m-1}) + \frac{1}{r} I(H_1, \cdots, H_{m-1}, \fq^{\alpha} \rightarrow H_{m})\\
& \le & \left(1-\frac{1}{r}\right) \left(1 - \frac{1}{r}+\frac{1}{r^2}\right)^{m-1} \alpha^2 + \frac{1}{r^2} \left(1 - \frac{1}{r}+\frac{1}{r^2}\right)^{m-1} \alpha^2\\
& = & \left(1 - \frac{1}{r} + \frac{1}{r^2}\right)^{m} \alpha^2.
\eean
Similarly:
\bean
I\left(H_1, \cdots, H_{m}, \fq^{\alpha} \rightarrow \cap_{i=1}^p H_{m+i}\right) &\le& \left(1-\frac{1}{r}\right) I\left(H_1, \cdots, H_{m-1}, \fq^{\alpha} \rightarrow \cap_{i=1}^p H_{m+i}\right) \\ && \ \ \ + \frac{1}{r} I\left(H_1, \cdots, H_{m-1}, \fq^{\alpha} \rightarrow \cap_{i=0}^p H_{m+i}\right)\\
& \le & \frac{1}{r^p}\left(1-\frac{1}{r}\right) \left(1 - \frac{1}{r}+\frac{1}{r^2}\right)^{m-1} \alpha^2 + \frac{1}{r^{p+2}} \left(1 - \frac{1}{r}+\frac{1}{r^2}\right)^{m-1} \alpha^2\\
& = & \frac{1}{r^p} \left(1 - \frac{1}{r} + \frac{1}{r^2}\right)^{m} \alpha^2.
\eean
\eprf
\ \\
\emph{Proof of Theorem~\ref{thm:main_gen}}
	From Lemma~\ref{lem:subspace_dim} it follows that if there is an $(\alpha, r)_{\fq}$ MSR subspace family $\{H_1, \cdots, H_{k-1}\}$ of size $(k-1)$ then:
	\bean
	I(H_1, \cdots, H_{k-1}) &\le& \left(1 - \frac{1}{r} + \frac{1}{r^2}\right)^{k-1} \alpha^2
	\eean
	Identity map is an element in $\calf(H_1 \rightarrow H_1, \cdots, H_{k-1} \rightarrow H_{k-1})$. Therefore:
	\bean
	1 &\le& I(H_1, \cdots, H_{k-1}) \le \left(1 - \frac{1}{r} + \frac{1}{r^2}\right)^{k-1} \alpha^2\\
	\alpha &\ge& e^{\frac{(k-1)}{2 \log\left(\frac{r^2}{r^2-r+1}\right)}} \ge e^{\frac{(k-1)(r-1)}{2r^2}} \text{ as } log(1+x) \ge \frac{x}{1+x}.
	\eean
\eprf
\section{Improved Lower Bound on Subpacketization Level For Optimal Access MSR Codes \label{sec:msr_oa_bound}}
In this section we first start by presenting an improved lower bound for the sub-packetization level of optimal-access MSR codes. We then present lower bounds for the case of MDS codes where single node repair of $w \le k$ nodes can be done optimally. We also present lower bounds for the case $d < n-1$.

\begin{thm} \label{thm:main} \textbf{ (Subpacketization Bound for Optimal Access case)}: 
	Let $\mathcal{C}$ be a linear optimal access MSR code having parameter set 
	\bean
	\left\{ (n,k,d),\ (\alpha,\beta),\ B, \ \fq) \right\},
	\eean
	with $d=(n-1)$ and linear repair for all $n$ nodes.  Then we must have:
	\bean
	\alpha \geq  \min \{r^{\lceil \frac{n-1}{r} \rceil},r^{k-1}\} \label{eq:Main_Bound}.
	\eean
\end{thm}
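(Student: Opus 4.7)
The plan is to adapt and sharpen the Tamo--Wang--Bruck invariant-subspace method, with two essential generalizations: (i) we must handle repair matrices $S_{(i,j)}$ that depend on the helper-node index $i$, and (ii) we must use the fact that \emph{every} node (not only systematic ones) admits optimal-access repair. The starting observation is that, for an optimal-access MSR code, each row space $V_{(i,j)} := \langle S_{(i,j)} \rangle \subseteq \fq^{\alpha}$ is a \emph{coordinate} subspace of dimension $\beta = \alpha/r$, so the repair of node $j$ from helper $i$ is specified by a subset $B_{(i,j)} \subseteq [\alpha]$ of size $\beta$. The goal is to translate the MDS/interference-alignment constraints into multiplicative constraints on the sizes of intersections of these coordinate subsets.

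The first main step is a structural lemma that replaces the constant-repair-subspace invariant of \cite{Tamo_Wang_Bruck}. For a fixed failed node $j$, the minimum-bandwidth repair equation $\uc{j} = f_j(\{S_{(i,j)}\uc{i} : i \neq j\})$, combined with the MDS parity-check relations, forces a very rigid alignment of the spaces $\{V_{(i,j)}\}_{i\neq j}$: intuitively, for any $r$ helper nodes the downloaded subspaces must ``cover'' the $\alpha$-dimensional ambient space in a non-overlapping way. I would formalize this into an inequality of the form
\begin{equation*}
\dim\!\left( \bigcap_{j \in T} V_{(i,j)}^{\perp}\right) \;\leq\; \frac{\alpha}{r^{|T|}},
\end{equation*}
valid for appropriate choices of the helper index $i$ and subsets $T$ of failed nodes of size at most $r$, where $V_{(i,j)}^{\perp}$ is the complementary coordinate subspace. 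Because all subspaces involved are coordinate, this reduces to a bound on the cardinality of an intersection of sets $B^c_{(i,j)}$, which is the combinatorial content that drives the exponent.

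With the structural lemma in hand, I would build a chain argument. Starting from the full space $\fq^{\alpha}$ of dimension $\alpha$, repeatedly apply the lemma by adjoining batches of $r$ nodes to the set $T$. Each batch contracts the dimension of the running intersection by a factor of at most $r$, and the process can be continued as long as the chain has not yet forced data reconstruction to fail. After $\lceil (n-1)/r \rceil$ contractions the remaining dimension must still be at least one, giving the first term $\alpha \geq r^{\lceil (n-1)/r \rceil}$ of the bound. The second term, $r^{k-1}$, arises when the MDS property intervenes first: once the set of ``alive'' nodes along the chain reaches $k$ nodes, data reconstruction becomes saturated and the chain is forced to terminate at depth $k-1$, producing the alternate bound $\alpha \geq r^{k-1}$.

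The principal obstacle, relative to the constant-repair case of \cite{Tamo_Wang_Bruck}, is to show that one can meaningfully intersect spaces $V_{(i,j_1)}$ and $V_{(i',j_2)}$ coming from \emph{different} helper indices $i, i'$. Because the coordinate subspaces corresponding to the same failed node $j$ but different helpers can differ, one must carefully choose the order in which nodes are added to $T$, and argue that by suitably selecting the helper indices along the chain, the contraction inequality still applies. I expect to manage this by a pigeonhole-style selection of helpers that keeps the relevant coordinate subspace aligned across the chain, which is precisely where the coordinate-subspace (help-by-transfer) structure is crucial---it provides only finitely many possible repair subspaces per node, so an averaging/selection argument can be made rigorous. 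A secondary obstacle is the integer boundary case when $r \nmid (n-1)$, handled by accounting for the last partial batch in the chain, which accounts for the ceiling in $\lceil (n-1)/r \rceil$.
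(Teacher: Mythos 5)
Your overall plan---an intersection-dimension lemma on repair subspaces in the spirit of Tamo--Wang--Bruck, coupled with a counting argument exploiting the coordinate-subspace structure of help-by-transfer repair---is the same general direction as the paper's proof. However, there are two genuine gaps, one in the stated key inequality and one in the final counting step, and the paper's handling of the non-constant repair matrices is also structurally different from what you propose.

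First, the structural inequality you state, $\dim\bigl(\bigcap_{j\in T} V_{(i,j)}^{\perp}\bigr)\leq \alpha/r^{|T|}$, is false as written. Each $V_{(i,j)}$ has dimension $\beta=\alpha/r$, so its complementary coordinate subspace $V_{(i,j)}^{\perp}$ has dimension $\alpha(r-1)/r$, and even for $|T|=1$ the left side equals $\alpha(r-1)/r$, which exceeds $\alpha/r$ whenever $r\geq 3$; for larger $|T|$ a crude subadditivity bound gives $\dim\bigl(\bigcap_{j\in T} V_{(i,j)}^{\perp}\bigr)\geq \alpha(1-|T|/r)$, which again is far larger than $\alpha/r^{|T|}$ when $|T|<r$. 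The correct object to control is the intersection of the repair subspaces themselves, not their complements: the paper proves $\dim\bigl(\bigcap_{u\in U}\langle S_{(p,u)}\rangle\bigr)\leq \alpha/r^{|U|}$, first for $|U|\leq k-1$ and then (conditionally on $\alpha< r^{k-1}$, since the $(k-1)$-fold intersection is then already zero) for all $|U|\leq n-1$. Moreover, the way the non-constant repair matrices are tamed is not by a pigeonhole selection of helpers, but by an exact invariance: interference alignment forces $\bigl(\bigcap_j \langle S_{(p_i,u_j)}\rangle\bigr)A_{p_i,u_{\ell+1}}$ to be independent of $p_i\in P$, so $\dim\bigl(\bigcap_{u\in U}\langle S_{(p,u)}\rangle\bigr)$ is literally the same for every parity helper $p$, and the full-rank repair condition for one further node splits this dimension by a factor of (at least) $r$.

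Second, your chain argument terminates with ``after $\lceil (n-1)/r\rceil$ contractions the remaining dimension must still be at least one,'' but there is no justification for that positivity, and in general it is simply false: the intersection over all $n-1$ repair subspaces can, and typically will, be the zero space. The paper does not run a chain down to a supposed nontrivial residual; it instead forms a bipartite graph with the $\alpha$ standard basis vectors on one side and the $n-1$ coordinate repair subspaces $\langle S_{(n,j)}\rangle$ on the other, joining $e_i$ to $S_{(n,j)}$ iff $e_i\in\langle S_{(n,j)}\rangle$. The intersection bound shows each $e_i$ has degree at most $\lfloor\log_r\alpha\rfloor$, while each right node has degree exactly $\beta=\alpha/r$. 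Double counting edges gives $\alpha\lfloor\log_r\alpha\rfloor\geq (n-1)\alpha/r$, hence $\lfloor\log_r\alpha\rfloor\geq\lceil(n-1)/r\rceil$ since the left side is an integer, and thus $\alpha\geq r^{\lceil(n-1)/r\rceil}$ when $\alpha<r^{k-1}$. This replaces the unsupported ``dimension still at least one'' claim with a rigorous averaging argument and is also where the ceiling naturally arises. You should replace both your complementary-subspace lemma and your chain termination step with the corresponding intersection bound and bipartite double counting; with those repaired, the structure of the proof matches the paper's.
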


\vspace*{0.3in}

\begin{proof} 
	
	\begin{enumerate}
		\item {\em Invariance of Repair Matrices to Choice of Generator Matrix} We first observe that the repair matrices can be kept constant, even if the generator matrix of the code changes.  This is because the repair matrices only depend upon relationships that hold among code symbols of any codeword in the code and are independent of the particular generator matrix used in encoding.  In particular, the repair matrices are insensitive to the characterization of a particular node as being either a systematic or parity-check node.
		\item {\em Implications for the Dimension of the Repair Subspace} 
		From Lemma~\ref{lem:rsi}, we have that 
		\bea 
		\sum_{i=1}^{r} \dim \left ( \bigcap_{u \in U} <S_{p_i \rightarrow u}> \right )  \ \leq \ 
		\dim \left ( \bigcap_{u \in U \setminus \{u_{\ell}\}} <S_{p \rightarrow u}> \right ), 
		\eea
		and moreover that $\dim \left (\bigcap_{u \in U} <S_{(p,u)}> \right )$ is the same for all $p \in P$. 
		It follows that 
		\bea 
		r \times \dim \left (\bigcap_{u \in U} <S_{p \rightarrow u}> \right ) \leq \dim \left ( \bigcap_{u \in U \setminus \{u_{\ell}\}} <S_{p \rightarrow u}> \right ) \label{Cond_3},
		\eea
		i.e., 
		\bea \nonumber 
		\dim \left ( \bigcap_{u \in U} <S_{p \rightarrow u}> \right ) & \leq & \frac{\dim \left (\bigcap_{u \in U\setminus \{u_{\ell}\}} <S_{p \rightarrow u}> \right )}{r}  \\ \nonumber 
		&  \leq & \frac{\dim \left ( \bigcap_{u \in U\setminus \{u_{\ell},u_{\ell-1}\}} <S_{p \rightarrow u}> \right )}{r^2}  \\ \nonumber 
		& \leq & \frac{\dim \left (<S_{p \rightarrow u_1}> \right )}{r^{\ell-1}} \\ 
		& = & \frac{\alpha}{r^{\ell}} \ = \  \frac{\alpha}{r^{|U|}}.
		\label{eq:Cond_4}
		\eea	   
		Lemma \ref{lem:rsi} and its proof holds true for any set $U \subseteq [n]$ of size $2 \leq |U| \le (k-1)$.    As a result, equation \eqref{eq:Cond_4}, also holds for any set $U \subseteq [n]$ of size $2 \leq |U| \le (k-1)$.   
		
		We would like to extend the above inequality to hold even for the case when $U$ is of size $k \leq |U| \leq (n-1)$.  We get around the restriction on $U$ as follows.   It will be convenient in the argument, to assume that $U$ does not contain the $n$th node, i.e., $U \subseteq [n-1]$ and $ n \in P$.   Let us next suppose that  $\alpha < r^{k-1}$ and that $U$ is of size $(k-1)$. We would then have: 
		\bea \label{eq:intersection}
		\dim \left ( \bigcap_{u \in U} <S_{n \rightarrow u}> \right ) & \leq  & \frac{\alpha}{r^{k-1}} \ < \ 1,
		\eea
		which is possible iff
		\bean
		\dim \left(  \bigcap_{u \in U} <S_{n \rightarrow u}> \right) & = & 0.
		\eean
		But this would imply that 
		\bean
		\dim \left(  \bigcap_{u \in F} <S_{n \rightarrow u}> \right) & = & 0.
		\eean
		for any subset $F\subseteq [n-1]$ of nodes of size $|F|$ satisfying $(k-1) \leq |F| \leq (n-1)$.  We are therefore justified in extending the inequality in \eqref{eq:Cond_4} to the case when $U$ is replaced by a subset $F$ whose size now ranges from $2$ to $(n-1)$, i.e., we are justified in writing: 
		\bea \label{eq:size_bd} 
		\dim \left ( \bigcap_{u \in F} <S_{n \rightarrow u}> \right ) &  \leq  & \frac{\alpha}{r^{|F|}} 
		\eea	   
		for any $F \subseteq [n-1]$, of size $2 \leq |F| \leq (n-1)$.  A consequence of the inequality \eqref{eq:size_bd} is that 
		\bean
		\dim \left ( \bigcap_{u \in F} <S_{n \rightarrow u}> \right ) & \geq 1 ,
		\eean
		implies that $|F| \leq \lfloor \log_{r} (\alpha) \rfloor $.    In other words, a given non-zero vector can belong to at most $\lfloor \log_{r} (\alpha) \rfloor$ repair subspaces among the repair subspaces: $<S_{n \rightarrow 1}>,\hdots,<S_{n \rightarrow n-1}>$.
		
		\item {\em Counting in a Bipartite Graph} The remainder of the proof then follows the steps outlined in Tamo et. al \cite{TamoWangBruck}.   We form a bipartite graph with $\{\mathbf{e}_1,...,\mathbf{e}_{\alpha}\}$ (standard basis) as left nodes and $S_{n \rightarrow 1},...,S_{n \rightarrow n-1}$ as right nodes as shown in Fig.~\ref{fig:biparite}. We declare that edge $(\mathbf{e}_i,S_{n \rightarrow j})$ belongs to the edge set of this bipartite graph iff $\mathbf{e}_i \in <S_{n \rightarrow j}>$. Now since the MSR code is an optimal access code, the rows of each repair matrix $S_{n \rightarrow j}$ must all be drawn from the set $\{\mathbf{e}_1,...,\mathbf{e}_{\alpha}\}$. 
		
		\begin{center}
			\begin{figure}
				\begin{center}
					\includegraphics[width=3in]{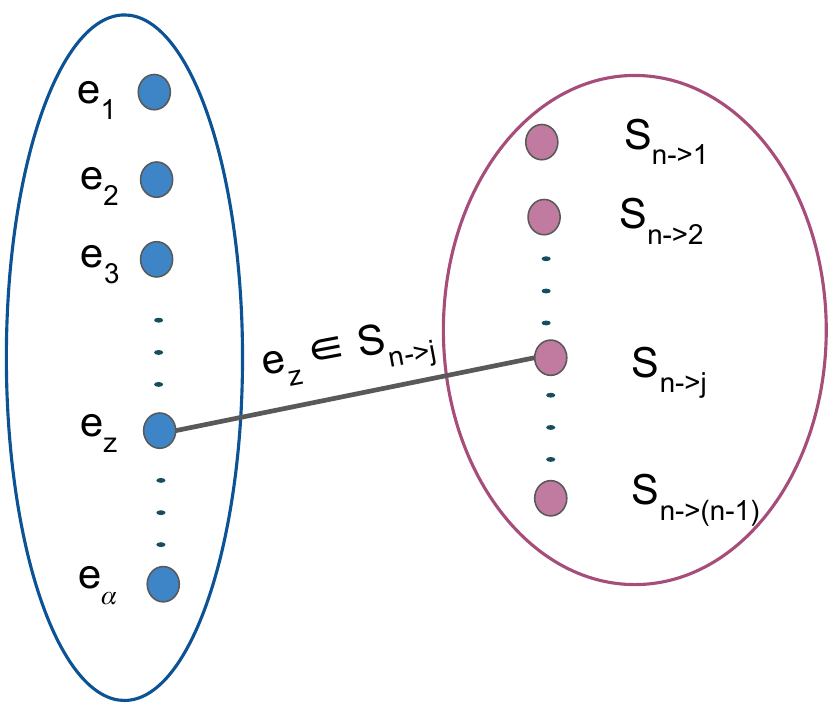}
				\end{center} 
				\caption{The above figure shows the bipartite graph appearing in the counting argument used to provie Theorem~\ref{thm:main}.  Each node on the left corresponds to an element of the standard basis $\{\mathbf{e}_1,...,\mathbf{e}_{\alpha}\}$.  The nodes to the right are associated to the repair matrices $S_{n \rightarrow 1},...,S_{n \rightarrow n-1}$.}
			\end{figure} \label{fig:biparite}
		\end{center}
		Counting the number of edges of this bipartite graph in terms of node degrees on the left and the right, we obtain:
		\bean
		\alpha \lfloor \log_r(\alpha) \rfloor \geq (n-1)\frac{\alpha}{r},  \\
		\log_r(\alpha) \geq \lfloor \log_r(\alpha) \rfloor \geq \lceil \frac{(n-1)}{r} \rceil,  \\
		\log_r(\alpha) \geq \lceil \frac{(n-1)}{r} \rceil,   \\
		\alpha \geq r^{\lceil \frac{n-1}{r} \rceil} 	.	  
		\eean
		Thus we have shown that if $\alpha < r^{k-1}$, we must have $\alpha \geq r^{\lceil \frac{n-1}{r} \rceil}$.  It follows that 
		\bean
		\alpha \geq \min \{r^{\lceil \frac{n-1}{r} \rceil},r^{k-1}\}.
		\eean		   
	\end{enumerate}    
\end{proof}

\subsection{Sub-Packetization Bound for MDS Codes with Optimal, Help-By-Transfer Repair of a Node Subset}	

\begin{cor} \label{cor:node_subset}  
	Let $\mathcal{C}$ be a linear $[n,k]$ MDS code over the vector alphabet $\fq^{\alpha}$ containing a distinguished set $W$ of $|W|=w \leq (n-1)$ nodes.  Each node in $W$ can be repaired, through linear repair, by accessing and downloading, precisely $\beta=\frac{\alpha}{r}$ symbols over \fq \  from each of the remaining $d=(n-1)$ nodes.
	In other words, the repair of each node in $W$ can be carried out through help-by-transfer with minimum repair bandwidth with only linear operations.   Then we must have 
	\bean
	\alpha \geq  \left\{ \begin{array}{rl} \min \{ r^{\lceil \frac{w}{r} \rceil},r^{k-1} \}, \  & w > (k-1) \\
		r^{\lceil \frac{w}{r} \rceil}, \ & w \leq (k-1) . \end{array} \right. 
	\eean
\end{cor}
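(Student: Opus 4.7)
The plan is to mimic the proof of Theorem~\ref{thm:main} essentially verbatim, with the set $[n-1]$ of all potentially failed nodes replaced by the subset $W$.

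Because $w \leq n-1$, I can relabel the nodes so that one fixed node, call it $n$, lies outside $W$. With this labeling, Lemma~\ref{lem:rsi} applies to any $U \subseteq W$ with $2 \leq |U| \leq k-1$: the remaining $k - |U|$ systematic nodes are drawn arbitrarily from $[n-1]\setminus U$, and node $n$ is placed among the $r$ parity nodes. The only requirement imposed by the lemma is that every node in $U$ be repairable by help-by-transfer with minimum bandwidth, which is guaranteed by $U \subseteq W$. Iterating the lemma by removing one node of $U$ at a time, keeping $n$ as a parity node throughout and moving the dropped element into $V$, gives exactly as in equation \eqref{eq:Cond_4}
\begin{equation*}
\dim \Bigl( \bigcap_{u \in F} <S_{(n,u)}> \Bigr) \; \leq \; \frac{\alpha}{r^{|F|}}
\end{equation*}
for every $F \subseteq W$ of size $1 \leq |F| \leq \min(w, k-1)$.

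I then split into the two cases of the statement. When $w \leq k-1$, the dimension bound above applies for every size $|F|$ up to $w$, so any nonzero standard basis vector $e_i$ can lie in at most $\lfloor \log_r \alpha \rfloor$ of the subspaces $\{<S_{(n,u)}> : u \in W\}$. When $w > k-1$, I additionally assume $\alpha < r^{k-1}$; the bound with $|F| = k-1$ then forces $\dim(\bigcap_{u \in F} <S_{(n,u)}>) = 0$, which by monotonicity of intersection propagates to every $F \subseteq W$ with $|F| \geq k-1$, and so again a nonzero vector lies in at most $\lfloor \log_r \alpha \rfloor$ of the subspaces, exactly as in the theorem.

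Finally I run the bipartite-graph counting of Theorem~\ref{thm:main} with the right vertex set taken to be $\{S_{(n,u)} : u \in W\}$, which has exactly $w$ elements since $n \notin W$. Each right vertex has degree $\beta = \alpha/r$ by the optimal-access property, each left vertex has degree at most $\lfloor \log_r \alpha \rfloor$ by the preceding paragraph, and a double count of edges yields $w \alpha / r \leq \alpha \lfloor \log_r \alpha \rfloor$, whence $\alpha \geq r^{\lceil w/r \rceil}$. In the case $w > k-1$, this was established only under the hypothesis $\alpha < r^{k-1}$, and so gives the stated bound $\alpha \geq \min\{r^{\lceil w/r \rceil},\, r^{k-1}\}$. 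The only real point of care, compared with the theorem, is verifying that the extension of the dimension bound to $|F| > k-1$ still works when $F$ is restricted to $W$; this is immediate, since that extension argument only invokes the bound for a single $(k-1)$-subset of $F$, and such subsets remain inside $W$ whenever $F$ does.
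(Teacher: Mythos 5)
Your proposal is correct and follows essentially the same approach as the paper: restrict $U$ to subsets of $W$, place the fixed node $n$ outside $W$ as a reference parity node, derive the same dimension inequality on $\bigcap_{u\in U}\langle S_{(n,u)}\rangle$, handle $w\leq k-1$ directly and $w>k-1$ via the $\alpha<r^{k-1}$ reduction, and close with the same bipartite double-count with $w$ right vertices. The only difference is that you spell out a few details the paper leaves implicit (e.g., why the extension to $|F|>k-1$ still works when $F\subseteq W$), but the argument is identical in substance.
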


\begin{proof}  We remark that even in this setting, it is known that $d\beta=d\frac{\alpha}{r}$ is the minimum repair bandwidth needed to repair the nodes in $W$, hence the nodes in $W$ are those for which the repair is optimal.  To prove the corollary, consider a subset $U \subseteq W$ and $|U| \leq k-1$ and repeat the steps used to prove Lemma \ref{lem:rsi} with this set $U$. Since in the proof of Lemma \ref{lem:rsi}, we only consider equations regarding repair of nodes in $U$, the proof of Lemma \ref{lem:rsi} will go through.   Assuming wolog that $n \notin W$, we will arrive at the following analogue of \eqref{eq:Cond_4}: 
	\bean
	\dim \left( \bigcap_{u \in U} <S_{n \rightarrow u}> \right) \leq \frac{\alpha}{r^{|U|}}.
	\eean
	For the case when $|W| > (k-1)$, we again extend the range of validity of this inequality to the case when $U$ is any subset of $W$, by first assuming that $\alpha < r^{k-1}$ and proceeding as in the proof of Theorem~\ref{thm:main} above.  For the case when $|W| \leq(k-1)$, no such extension is needed.  We then repeat the bipartite-graph-counting argument used in the proof of Theorem \ref{thm:main}, with the difference that the number of nodes on the right equals $w$ with $\{ S_{(n,j)}, j \in W\}$ as nodes in the right.   This will then give us the desired result.
\end{proof}

\subsection{Subpacketization Bound for All-Node Repair and Constant Repair Subspaces}	

\begin{cor} \label{sb_5} 
	Given a linear optimal access $\{(n,k,d),(\alpha,\beta)\}$ MSR code $\mathcal{C}$ with  $d =n-1$ and linear repair for all $n$ nodes with $S_{i \rightarrow j}=S_j$,$\forall i,j \in [n],i \neq j$ (thus the repair matrix $S_{i \rightarrow j}$ is independent of $i$), we must have:
	
	\bean
	\alpha \geq  \min \{ r^{\lceil \frac{n}{r} \rceil},r^{k-1}\}.
	\eean
	
\end{cor}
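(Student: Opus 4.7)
The plan is to adapt the argument for Theorem \ref{thm:main} to the present setting, where the hypothesis $S_{(i,j)}=S_j$ makes the row space $\langle S_{(i,j)}\rangle$ depend only on the failed node $j$; set $\langle S_j\rangle := \langle S_{(i,j)}\rangle$ for each $j\in[n]$. The key structural point is that there are now only $n$ distinct repair subspaces $\langle S_1\rangle,\ldots,\langle S_n\rangle$ in total, and \emph{none} of them is tied to a distinguished helper index. This will let the concluding counting argument pick up one extra right vertex relative to Theorem \ref{thm:main}, turning $n-1$ into $n$ in the final bound.

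First I would re-examine Lemma \ref{lem:rsi}. Under the constant-repair-matrix assumption, the inequality \eqref{eq:lemma_2} loses its dependence on the helpers $p_i\in P$ and specializes to
\bean
r \, \dim\!\left( \bigcap_{u \in U} \langle S_u\rangle \right) \;\leq\; \dim\!\left( \bigcap_{u \in U\setminus\{u_\ell\}} \langle S_u\rangle \right).
\eean
Iterating this relation gives $\dim\bigl(\bigcap_{u \in U}\langle S_u\rangle\bigr) \leq \alpha / r^{|U|}$ for every $U \subseteq [n]$ with $2 \leq |U| \leq k-1$. The important observation is that \emph{any} such $U$ is admissible: the hypotheses of Lemma \ref{lem:rsi} only ask for a disjoint partition $U \cup V \cup P = [n]$ with $|V|\geq 1$ and $|P|=r$, and such a partition exists for every $U\subseteq [n]$ of size at most $k-1$. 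As already noted in the proof of Theorem \ref{thm:main}, the subspaces $\langle S_j\rangle$ are intrinsic to the code and unchanged under any relabelling of systematic versus parity nodes, so there is no harm in letting $U$ range over any $(k-1)$-subset of $[n]$.

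Next I would extend the bound to all of $[n]$. Assume $\alpha < r^{k-1}$, since otherwise $\alpha\geq r^{k-1}$ already suffices. Then for $|U|=k-1$ the bound forces $\dim(\bigcap_{u\in U}\langle S_u\rangle)=0$, and by monotonicity $\dim\bigl(\bigcap_{u\in F}\langle S_u\rangle\bigr)\leq \alpha/r^{|F|}$ for every $F\subseteq[n]$ (the inequality being vacuous once $|F|\geq k-1$). Equivalently, a nonzero vector lies in at most $\lfloor \log_r\alpha\rfloor$ of the $n$ subspaces $\langle S_1\rangle,\ldots,\langle S_n\rangle$.

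Finally I would run the double-counting argument on the bipartite graph with left vertices $\{e_1,\ldots,e_\alpha\}$ and right vertices $\{\langle S_1\rangle,\ldots,\langle S_n\rangle\}$. Each right vertex has degree exactly $\beta=\alpha/r$ by the optimal-access property, and each left vertex has degree at most $\lfloor\log_r\alpha\rfloor$ by the previous step. This yields $n\,\alpha/r \;\leq\; \alpha\,\lfloor\log_r\alpha\rfloor$, whence $\alpha \geq r^{\lceil n/r\rceil}$, and combining with the fallback case gives the claimed bound $\alpha\geq\min\{r^{\lceil n/r\rceil}, r^{k-1}\}$. The main obstacle, as flagged above, is justifying that $U$ in the adapted Lemma \ref{lem:rsi} may include \emph{any} node of $[n]$ — in particular a node that was forced to play the role of a fixed helper in Theorem \ref{thm:main}. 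This is exactly where the constant-repair-matrix hypothesis earns its keep: it decouples the repair subspace from the identity of the helper, so $\langle S_n\rangle$ becomes a legitimate right vertex of the bipartite graph on the same footing as $\langle S_1\rangle,\ldots,\langle S_{n-1}\rangle$.
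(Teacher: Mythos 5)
Your proof is correct and follows essentially the same path as the paper's: apply Lemma~\ref{lem:rsi} with an arbitrary $U\subseteq[n]$ of size at most $k-1$ and a disjoint $P$ of size $r$, note that the constant-repair-matrix hypothesis makes the intersection $\bigcap_{u\in U}\langle S_u\rangle$ independent of the choice of helper, extend to all $U$ by assuming $\alpha<r^{k-1}$, and then run the bipartite double-counting with $n$ right vertices $\{S_1,\dots,S_n\}$ instead of $n-1$. The extra sentences you add to justify why the $n$th node may now appear as a right vertex are exactly the point the paper leaves implicit.
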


\begin{proof}
	For a given subset $U \subseteq [n]$ of nodes, of size $2 \leq |U| \leq k-1$, let $P \subseteq [n]$ be a second subset disjoint from $U$ (i.e., $ U \bigcap P = \emptyset$), of size $|P|=r$.    In this setting, the proof of Lemma \ref{lem:rsi} will go through for the pair of subsets $U,P$ and we will obtain that for any $p \in P$:
	\bean
	\dim \left( \bigcap_{u \in U} <S_{u}> \right) = \dim \left( \bigcap_{u \in U} <S_{p \rightarrow u}> \right) \leq \frac{\alpha}{r^{|U|}}
	\eean
	As before, we next extend the validity of the above inequality for any $U \subseteq [n]$ by assuming that $\alpha < r^{k-1}$ and following the same steps as in the proof of Theorem \ref{thm:main}.    Following this, we repeat the bipartite-graph construction and subsequent counting argument as in the proof of Theorem \ref{thm:main} with one important difference.   In the bipartitie graph constructed here, there are $n$ nodes on the right (as opposed to $(n-1)$), with $\{S_{1},\hdots,S_{n}\}$ as nodes in the right. The result then follows.
\end{proof}

\subsection{Sub-packetization Bound for an Arbitrary Number $d$ of Helper Nodes}
\begin{cor} \label{sb_4} 
	Let $\mathcal{C}$ be a linear $[n,k]$ MDS code over the vector alphabet $\fq^{\alpha}$ containing a distinguished set $W$ of $|W|=w \leq d$ nodes.  Each node $j$ in $W$ can be repaired, through linear repair, by accessing and downloading, precisely $\beta=\frac{\alpha}{s}$ symbols over \fq \, from each of the $d$ helper nodes where $s=d-k+1$, the $d$ helper nodes are any set of $d$ nodes apart from the failed node $j$.
	In other words, the repair of each node in $W$ can be carried through help by transfer with minimum repair bandwidth with only linear operations.   Then we must have 
	\bea
	\label{eq:subpkt_ws}\alpha \geq  \left\{ \begin{array}{rl} \min \{ s^{\lceil \frac{w}{s} \rceil},s^{k-1}\}, \  & w > (k-1) \\
		s^{\lceil \frac{w}{s}\rceil}, \ & w \leq (k-1) . \end{array} \right. 
	\eea

\end{cor}

\begin{proof}
	To prove this, we simply restrict our attention to the (punctured) code obtained by selecting a subset of nodes of size $n'=(d+1)$ that includes the subset $W$ for which optimal repair is possible.   Applying the results of Corollary~\ref{cor:node_subset} then gives us the desired result.
\end{proof}

\subsection{Subpacketization Bound for Arbitrary $d$ and Repair Subspaces that are Independent of the Choice of Helper Nodes}	

\begin{cor} \label{sb_6} 
	Let  $\mathcal{C}$ be a linear optimal-access $\{(n,k,d),(\alpha,\beta)\}$ MSR code for some $d$, $k \leq d \leq n-1$, and linear repair for all $n$ nodes.  We assume in addition, that every node $j \in [n]$ can be repaired by contacting a subset $D \subseteq [n]\setminus \{j\}$, $|D|=d$ of helper nodes in such a way that the repair matrix $S^D_{i \rightarrow j}$ is independent of the choice of the remaining $(d-1)$ helper nodes, i.e., $S^D_{i \rightarrow j}=S_{i \rightarrow j}$,$\forall j \in [n]$ and $i \in D$, $\forall D \subseteq [n]\setminus \{j\}, |D|=d$.  Then we must have:
	\bean
	\alpha \geq  \min \{ s^{\lceil \frac{n-1}{s} \rceil},s^{k-1} \}.
	\eean
	
\end{cor}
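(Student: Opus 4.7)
The plan is to combine puncturing with the bipartite-graph counting already used in the proof of Theorem~\ref{thm:main}, leveraging the hypothesis $S^D_{(i,j)} = S_{(i,j)}$ that makes the repair matrices well defined independently of the helper set $D$. Setting $s = d-k+1$, the target is to establish for a fixed node (say $n$) the inequality
\bean
\dim\left(\bigcap_{u \in F} <S_{(n,u)}>\right) & \leq & \frac{\alpha}{s^{|F|}} ,
\eean
for every $F \subseteq [n-1]$, and then to feed it into the same bipartite-graph count used in the proof of Theorem~\ref{thm:main}.

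For the case $|F| \leq k-1$ I would puncture $\mathcal{C}$ to a subset $T \subseteq [n]$ of size $d+1$ containing $\{n\}\cup F$. The punctured code $\mathcal{C}|_T$ is an $[d+1,k]$ MDS code over $\fq^{\alpha}$ with $r' = s$ parities. Because every repair inside $\mathcal{C}|_T$ must draw from exactly $d$ helpers (all the other nodes of $T$), the helper-independence hypothesis $S^D_{(i,j)} = S_{(i,j)}$ guarantees that $\mathcal{C}|_T$ is itself an optimal-access MSR code with $d' = |T|-1 = d$ and with repair matrices inherited unchanged from $\mathcal{C}$. Applying Lemma~\ref{lem:rsi} and iterating the chain of inequalities \eqref{eq:Cond_4} inside $\mathcal{C}|_T$, with $U = F$, a parity set $P$ of size $s$ containing the node $n$, and $V$ filling out the remaining $k-|F|$ nodes of $T$, produces the displayed bound for this range of $|F|$.

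To extend the bound to larger $|F|$, I would reuse the device from the proof of Theorem~\ref{thm:main}: under the auxiliary assumption $\alpha < s^{k-1}$, the bound at $|F| = k-1$ forces the corresponding intersection to be the zero subspace, and this zeroness propagates to every superset. Hence, under $\alpha < s^{k-1}$, the displayed inequality is valid for every $F \subseteq [n-1]$, so that a non-zero vector of $\fq^{\alpha}$ can lie in at most $\lfloor \log_s(\alpha)\rfloor$ of the $n-1$ row spaces $<S_{(n,1)}>,\ldots,<S_{(n,n-1)}>$.

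The final step is to repeat the bipartite-graph counting argument of Theorem~\ref{thm:main} verbatim, with $\alpha$ standard-basis left nodes and $n-1$ right nodes indexed by $S_{(n,1)},\ldots,S_{(n,n-1)}$. Each right node has degree $\beta = \alpha/s$ since the code is optimal access, and each left node has degree at most $\lfloor \log_s(\alpha)\rfloor$ by the previous paragraph. Double counting then gives $\alpha \lfloor \log_s(\alpha)\rfloor \geq (n-1)\alpha/s$, so $\alpha \geq s^{\lceil (n-1)/s\rceil}$, which together with the excluded alternative $\alpha \geq s^{k-1}$ yields the claimed bound $\alpha \geq \min\{s^{\lceil (n-1)/s\rceil},\,s^{k-1}\}$. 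The main subtlety I anticipate is the clean transfer of Lemma~\ref{lem:rsi} through puncturing, which is precisely where the helper-independence hypothesis is essential; without it one could apply the lemma only to one fixed puncture of size $d+1$ and would obtain the weaker bound $s^{\lceil d/s\rceil}$ rather than $s^{\lceil (n-1)/s\rceil}$.
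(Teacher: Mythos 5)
Your proposal is correct and follows essentially the same route as the paper: puncture to a $(d+1)$-node subset containing $\{n\}\cup F$ and a parity set $P$ of size $s=d-k+1$ with $n\in P$, invoke the helper-independence hypothesis so that the repair matrices transfer unchanged to the punctured code, apply Lemma~\ref{lem:rsi} and the chain of inequalities from \eqref{eq:Cond_4}, then extend to all $F\subseteq[n-1]$ under the assumption $\alpha < s^{k-1}$, and finish with the same bipartite-graph double count over $S_{(n,1)},\ldots,S_{(n,n-1)}$. Your closing remark about why helper-independence is the step that lifts the count from $d$ to $n-1$ right-hand nodes is a useful clarification the paper leaves implicit, but it does not change the argument.
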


\begin{proof}
	Given a set $U, U \subseteq [n-1]$, of nodes of size $|U|$, $2 \leq |U| \leq k-1$, let us form a set $P$ of size $|P|=d-k+1$, such that $P \subseteq [n]$,  with $n \in P$, $ U \bigcap P = \emptyset$.   Let $V$ be any subset of $[n]\setminus \{U \cup P\}$ such that $|U \cup V \cup P|=d+1$.   Next, consider  the punctured code obtained by restricting attention to the node subset $\{U \cup V \cup P\}$.  The proof of Lemma \ref{lem:rsi} applied to the subset $\{U \cup V \cup P\}$ of nodes will then go through and we will obtain:
	\bean
	\dim \left( \bigcap_{u \in U} <S_{n \rightarrow u}> \right) \leq \frac{\alpha}{s^{|U|}}.
	\eean
	We then repeat the process of extending the above inequality for any $U \subseteq [n-1]$ by assuming $\alpha < s^{k-1}$ and following the proof of Theorem \ref{thm:main}.   Following this, we construct the bipartite graph as always and repeat the counting argument employed in the proof of Theorem \ref{thm:main} with one difference.   On the right side of the bipartite graph, we now have the $(n-1)$ repair matrices $S_{n \rightarrow 1},\hdots,S_{n \rightarrow n-1}$ as the right nodes of the bipartite graph.   This will give us the desired result.   We omit the details.
\end{proof}

\section{Structure of MDS codes with Optimal Access Repair of a subset of nodes, Optimal sub-packetization\label{sec:struct_thm}}
In this subsection, we are going to deduce the structure of linear MDS code that allows for optimal-access single-node repair for any node in a subset of $w \le k$ nodes with optimal sub-packetization level (i.e., achieving our lower bound on $\alpha$ shown in equation~\eqref{eq:subpkt_ws}). While deducing the structure we also assume that these codes have constant repair matrices that are independent of the helper node set i.e., $S^D_{i \rightarrow j}=S_j$ where $D$ is helper node set with cardinality $d$. Also, since the codes are optimal access, the rows of $(\beta \times \alpha)$ repair matrices $S_j$'s are comprised of standard basis vectors from $\fq^{\alpha}$. Note that the condition $S^D_{i \rightarrow j}=S_j$ is satisfied by a lot of constructions in literature.

Theorem~\ref{thm:repss_struct} completely describes the structure of repair matrices of the code whereas Theorem~\ref{thm:struct} describes the support of the generator matrix of the code in terms of these repair matrices. We present the statements of the Theorems first. Before proving them we present some Lemma's that prove some properties on the repair matrices.

\bthm[Repair Matrix Structure]\label{thm:repss_struct}
Let, $\calc$ be an $(n,k,d=k+s-1)$ MDS code with sub-packetization level $\alpha=s^p$ such that any single node within the set of $[1:w]$ nodes can be repaired by optimal-access by downloading $\beta = s^{p-1}$ symbols from any $d$ helper nodes, where $w=sp \le k$, then the repair subspaces are forced to be:
\bea
\label{eq:rep_subspace}<S_{ys+x+1}> = \text{span} \{ \mathbf{e}_{z} \mid z \in [s^p], z_y = x \} \text{ for all } x \in [0,s-1], y \in [0, p-1]
\eea
upto a permutation over $[\alpha]$, where, $(z_0, \cdots, z_{p-1})$ is an $s$-array representation of $z-1$ for $z \in [s^p]$ and $\{e_z \mid z \in [s^p]\}$ is the set of standard basis vectors of $\fq^{\alpha}$.
\ethm

\begin{thm}[Generator Matix Support] \label{thm:struct}  Let $s = d-k+1 \le r, r=n-k, p$ be positive integers. Consider an $[n,k]$ MDS code over the vector alphabet $\fq^{\alpha}$ where $\alpha=s^p$. Let, $[1:w]$ be a set of $w=sp$ nodes such that any node in that set can be repaired, through linear repair, by accessing and downloading, precisely $\beta=\frac{\alpha}{s}$ symbols over \fq \  from each of the set of $d$ helper nodes. The set of $d$ helper nodes $D$ is such that , $|D|=d$ and $D \subseteq [n] \setminus \{u\}$. Let $d \geq k+1$, $k \geq 3$ and the generator matrix be of the form,
	\bea
	G = \left[
	\scalemath{1}{
		\begin{array}{cccc}
			I_{\alpha} & 0 & \hdots & 0 \\
			0 & I_{\alpha} & \hdots & 0 \\
			\vdots & \vdots & \vdots & \vdots \\
			0 & 0 & \hdots & I_{\alpha} \\
			A_{1,1} & A_{1,2} & \hdots & A_{1,k} \\
			A_{2,1} & A_{2,2} & \hdots & A_{2,k} \\
			\vdots & \vdots & \vdots & \vdots \\
			A_{r,1} & A_{r,2} & \hdots & A_{r,k} \\
		\end{array}
	}
	\right] \ \ \text{where } \underbrace{A_{i, j}}_{(\alpha \times \alpha)} = \left[\begin{array}{c}
		\mathbf{v}_{i,j}(1)^T \\
		\mathbf{v}_{i,j}(2)^T \\
		\vdots\\
		\mathbf{v}_{i,j}(\alpha)^T
	\end{array}\right]  \ \ \label{Gform_struct}
	\eea
	and $\mathbf{v}_{i,j}(z)$ for all $z \in [\alpha]$ are vectors in $\mathbb{F}_q^{\alpha}$.

	We define two sets:
	\bean
	N_j = \{ z \in [s^p] \ \mid \ \mathbf{e}_{z} \in <S_{j}> \}, \ \ L_{z} = \{ j \in [sp] \ \mid \ \mathbf{e}_{z} \in <S_{j}>\} \text{ for all } j \in [sp] \text{ and } z \in [s^p],
	\eean
	where $S_1, \cdots, S_{sp}$ are the repair matrices corresponding to repair of nodes $1, \cdots , sp$ respectively. It is clear to see that $j \in L_{z}$ iff $z \in N_j$. 
	
	\ben
	\item For $i \in [r], j \in [sp+1, k]$, the matrices $A_{i,j}$ are diagonal i.e., for all $z \in [\alpha]$, the $z$-th row,
	\bea
	\label{eq:row_sup2}
	\mathbf{v}_{i,j}(z) = a_{i,j, z}\mathbf{e}_{z},
	\eea
	where $a_{i,j,z}$ is an element in $\fq^*$\ .
	\item For $i \in [r], j \in [sp]$, the rows of matrices $A_{i, j}$,  satisfy following properties:
	\ben
	\item For all $z \notin N_j$
	\bea
	\label{eq:row_sup0}\mathbf{v}_{i,j}(z) = a_{i,j, z} \mathbf{e}_{z},
	\eea
	where $a_{i,j,z}$ is an element in $\fq^*$\ .
	\item For any $z \in N_j$, $i \in [r]$:
	\bea
	\label{eq:row_sup1}
	\text{Support}(\mathbf{v}_{i,j}(z)) &\subseteq& \cap_{j' \in L_z \setminus \{j\}} N_{j'}\\
	\nonumber \cup_{i \in D'}\text{Support}(\mathbf{v}_{i,j}(z)) &=& \cap_{j' \in L_z \setminus \{j\}} N_{j'} \text{ for any } D' \subseteq [r] \text{ such that } |D'| = s\\
	\label{eq:row_sup3} \cup_{i = 1}^r \text{Support}(\mathbf{v}_{i,j}(z)) &=& \cap_{j' \in L_z \setminus \{j\}} N_{j'}\\
	\nonumber |\cup_{i = 1}^r \text{Support}(\mathbf{v}_{i,j}(z))| &=& s
	\eea
	\item For any $z \ne z' \in N_j$
	\bean
	( \cup_{i = 1}^r \text{Support}(\mathbf{v}_{i,j}(z)) \cap 	( \cup_{i = 1}^r \text{Support}(\mathbf{v}_{i,j}(z'))  &=& \phi
	\eean
	
	\een 
	\een
\end{thm}

Before presenting the proofs for Theorems~\ref{thm:repss_struct} and \ref{thm:struct}, we present the following lemma that shows that the dimension of repair sub-spaces either reduces by a factor of $s=d-k+1$ or it is zero.

\blem\label{lem:intersect_card}
Let $s = d-k+1, p$ be positive integers. Consider an $[n,k]$ MDS code over the vector alphabet $\fq^{\alpha}$ where $\alpha=s^p$. Let, $[1:w]$ be a set of $w=sp \le (n-1)$ nodes such that any node $u \in W$ can be repaired, through linear repair, by accessing and downloading, precisely $\beta=\frac{\alpha}{s}$ symbols over \fq \  from each of the set of $d$ helper nodes. Then,
%
%
%
%
\bean
\text{dim}\left( \cap_{j \in J} <S_{j}>\right) = \begin{cases} s^{p - |J|} \text{ or } 0 & |J| \le p\\
	0 & |J| > p
\end{cases}
\eean
where $S_{j}$, $j \in [w]$ are the constant repair matrices.
\elem
\bprf
Let us recall the counting argument on the bipartite graph used in the proof of Theorem~\ref{thm:main}. Assume nodes $\mathbf{e}_1, \cdots, \mathbf{e}_{\alpha}$ on the left and $S_{1}, \cdots, S_{{w}}$ on the right as shown in Figure~\ref{fig:bipartite_wnodes}.
\begin{figure}[ht!]
	\bc
	\includegraphics[width=0.3\textwidth]{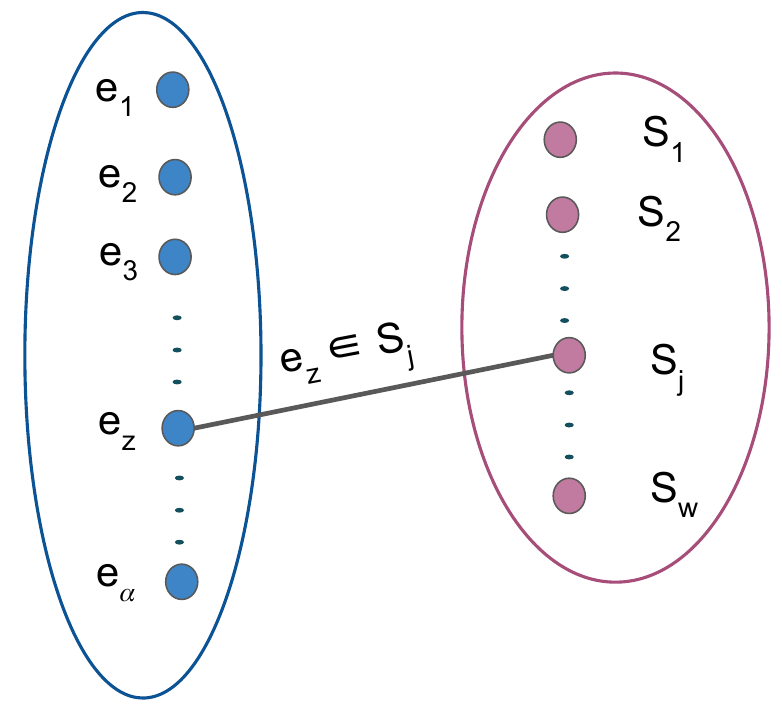}
	\caption{Bipartite graph indicating the elements in constant repair matrices \label{fig:bipartite_wnodes}}
	\ec
\end{figure}
The right degree of this bipartite graph is $\beta$ resulting in $w\beta$ edges. Let the average left degree be $d_{\text{avg}}$. Then: $d_{\text{avg}} \alpha = \beta w = sp\beta$, therefore $d_{\text{avg}} = p$.

We define the neighbors of left nodes $\mathbf{e}_{z}$ for $z \in [\alpha]$ in the graph as below:
\bean
L_{z} = \{ j \in [sp] \mid  \mathbf{e}_{z} \in <S_{j}>\}.
\eean 
From the proof of Corollary \ref{sb_6}, for any set $J \subseteq [w]$:
\bea
\label{eq:inter_subs_mds_repair} dim(\cap_{j \in J} <S_{j}> ) \leq \frac{\alpha}{s^{|J|}} .
\eea
By setting $J = L_z$ in the above equation where $z$ is a left node with maximum degree $d_{max} = |L_z|$ in the bipartite graph, we get that $d_{max} \le \log_s \alpha = p$. 
Hence we know that the degree of each left node is upper bounded by $p$. But since, the average left degree is $p$, it follows that the graph is regular with left degree $p$ and right degree $\beta$. Using \eqref{eq:inter_subs_mds_repair} for any set $J \subseteq [w]$:
\bea
\label{eq:ubcard}\text{dim}\left(\cap_{j \in J}<S_{j}>\right) &\le& \frac{\alpha}{s^{|J|}} = s^{p -|J|} (\text{as } \alpha = s^p) \\ 
\nonumber \implies \text{dim}\left(\cap_{j \in J}<S_{j}>\right) &=& 0 \text{ for } J \text{ such that } |J| > p
\eea
Let, $|J| \le p$, $\cap_{j \in J} <S_{j}> \ne \phi$, then there exists $z \in [\alpha]$ such that $\mathbf{e}_{z} \in \cap_{j \in J} <S_{j}>$. Consider the neighbors $L_{z}$ of $\mathbf{e}_{z}$ in bipartite graph. It is clear to see that $J \subseteq L_{z}$. We also have that:
\bean
\text{dim}(\cap_{j \in L_{z}} <S_{j}>) &\le& \frac{\alpha}{s^{|L_{z}|}} = \frac{s^p}{s^{|L_{z}|}} =  1 \ \ \ (\text{as } \alpha = s^p,|L_{z}| = p).
\eean
But, $\mathbf{e}_{z} \in \cap_{j \in L_{z}} S_{j}$ implying that:
\bean
\text{dim}(\cap_{j \in L_{z}} <S_{j}>) = 1 \ \text{ and } \cap_{j \in L_{z}} <S_{j}> = <\mathbf{e}_{z}>.
\eean
Applying Lemma \ref{lem:rsi} with $\ell = p \leq w$  for $d$ helper nodes  i.e., essentially replacing $r$ with $s$ in the proof of the lemma, it follows that:
\bea
\nonumber 1 = \text{dim}(\cap_{j \in L_{z}} <S_{j}>) & \le & \frac{1}{s^{p-|J|}} \text{dim}(\cap_{j \in J} <S_{j}>)\\
\label{eq:lbcard}\implies  \text{dim}(\cap_{j \in J} <S_{j}>) &\ge& s^{p -|J|}\\
\nonumber \implies  \text{dim}(\cap_{j \in J} <S_{j}>) &=& s^{p -|J|} \text{ (due to equations \eqref{eq:ubcard} and \eqref{eq:lbcard}).}
\eea
\eprf

\subsection{Proof of Theorem~\ref{thm:struct}, The Generator Matrix Support}
We will first show the matrices $A_{i, j}$ for $i \in [r]$, $j \in [sp+1:k]$ are diagonal matrices. We show this by proving that the support of $z$-th row is Support$(\mathbf{v}_{i,j}(z)) = \{z\}$. Firstly, from Lemma~\ref{lem:intersect_card} it is clear to see that:
	\bean
	\text{dim}(\cap_{j \in L_{z}} <S_{j}>)= \frac{\alpha}{s^{|L_{z}|}} = 1 \ \ \ (\text{as } |L_{z}| = p, \mathbf{e}_{z} \in \cap_{j \in L_{z}} <S_{j}>).
	\eean
	It also follows that $\cap_{j \in L_{z}} <S_{j}> = \mathbf{e}_{z}$. 
	From interference alignment conditions (see Lemma \ref{lem:rank_cond_p_node}) we have:
	\bean
	<S_{{j'}} A_{i, j}> &=& <S_{{j'}}> \ \ \text{ for all } i \in [r], \ \ j \in [k], j' \in [sp], j \ne j'.
	\eean
	For $j \in [sp+1, k]$ and $z \in [\alpha]$:
	\bean
	\cap_{j' \in L_{z}} < S_{{j'}} A_{i, j} > &=& \cap_{j' \in L_{z}} < S_{{j'}} >  = <\mathbf{e}_{z} > \ \text{ as } L_z \subseteq [1, sp], \ j \notin L_z \\
	< \left( \cap_{j' \in L_{z}} S_{{j'}} \right) A_{i, j} > &=& <\mathbf{v}_{i, j}(z)> = <\mathbf{e}_{z}>\\
	\implies \mathbf{v}_{i,j}(z) &=& a_{i,j, z} \mathbf{e}_{z} \text{ for all } i \in [r], j \in [sp+1, k], z \in [\alpha]
	\eean
	This implies that $A_{i,j}$ is diagonal for $i \in [r], j \in [sp+1:k]$. We will now show the support of rows of matrices $A_{i,j}$ for $i \in [r]$ and $j \in [sp]$. Let us consider the $z$-th row such that $z \in [\alpha] \setminus N_j$, then again from the interference alignment condition we have:
	\bean
	\cap_{j' \in L_{z}} < S_{{j'}} A_{i, j} > &=& \cap_{j' \in L_{z}} < S_{{j'}} >  = <\mathbf{e}_{z} > \ (\text{ as } j \notin L_{z})\\
	< \left( \cap_{j' \in L_{z}} S_{{j'}} \right) A_{i, j} > &=& <\mathbf{v}_{i, j}(z)> = <\mathbf{e}_{z}>\\
	\implies \mathbf{v}_{i,j}(z) &=& a_{i,j, z} \mathbf{e}_{z} \text{ for all } i \in [r], j \in [sp], z \notin N_j.
	\eean
	We therefore know the support of $\alpha - |N_j| = \alpha - \beta$ rows of $A_{i,j}$. We will now show the support structure for the remaining $\beta$ rows indexed by set $N_j$. For $j \in [sp]$, $z \in N_j$, it is clear that $j \in L_{z}$. By interference alignment condition (see Lemma \ref{lem:rank_cond_p_node}):
	\bean
	\cap_{j' \in L_{z} \setminus \{j\}} <S_{{j'}} A_{i, j}> &=& \cap_{j' \in L_{z} \setminus \{j\}} <S_{{j'}}> \text{ for all } i \in [r].
	\eean
	From Lemma \ref{lem:intersect_card}, 
	\bean
	\text{dim}(\cap_{j' \in L_{z} \setminus \{j\}} <S_{{j'}}>) &=& \frac{\alpha}{s^{|L_{z}|-1}} = s
	\eean
	as we know $\mathbf{e}_{z}$ is an element in  $\cap_{j' \in L_{z} \setminus \{j\}} <S_{{j'}}>$  by definition of $L_{z}$. 
	Using interference alignment conditions again we get:
	\bean
	\left( \cap_{j' \in L_{z} \setminus \{j\}} <S_{{j'}}> \right) A_{i, j} &=&  \cap_{j' \in L_{z} \setminus \{j\}} <S_{{j'}}> \\
	\implies \mathbf{v}_{i, j}(z) &\in &  \cap_{j' \in L_{z} \setminus \{j\}} <S_{{j'}}>
	\eean
	Let us consider repair of node $j$, where all the remaining systematic nodes $[k] \setminus \{j\}$ are part of helper node set and the rest $s$ nodes are picked from the set $P= \{k+1, \cdots, k+r\}$, then by the full rank condition from Lemma \ref{lem:rank_cond_p_node} we have:
	\bean
	\bigoplus_{i \in D'} <S_{j} A_{i, j}> = \fq^{\alpha}
	\eean
	where $D'$ is an $s$-element subset of $[r]$. This implies that:
	\bean
	\bigoplus_{i \in D'} <\{\mathbf{v}_{i,j}(z) \ |  \ z \in N_j \}> &=& \fq^{\alpha}\\
	\implies \cup_{i \in D', z \in N_j} \text{Support}(\mathbf{v}_{i,j}(z)) &=& [\alpha] 
	\eean
	We know that:
	\bean
	\mathbf{v}_{i,j}(z) & \in & \cap_{j' \in L_{z} \setminus \{j\}} <S_{{j'}}>\\
	\cup_{i \in D'} \text{Support}(\mathbf{v}_{i,j}(z)) &\subseteq& \cap_{j' \in L_{z} \setminus \{j\}} N_{j'} \text{ for any } D' \subseteq  [r], |D'|=s,\\
	\cup_{ z \in N_j} \cup_{i \in D'} \text{Support}(\mathbf{v}_{i,j}(z)) &\subseteq& \cup_{z \in N_j}  \left( \cap_{j' \in L_{z}  \setminus \{j\}} N_{j'} \right)\\
	\alpha = |\cup_{i \in D', z \in N_j} \text{Support}(\mathbf{v}_{i,j}(z))| &\le& |N_j|s = \alpha.
	\eean
	This implies that:
	\bean
	\cup_{i \in D'} \text{Support}(\mathbf{v}_{i,j}(z)) &=& \cap_{j' \in L_{z} \setminus \{j\}} N_{j'}, \text{ for any } D' \subseteq  [r], |D'|=s\\
	\text{Support}(\mathbf{v}_{i,j}(z)) &\subseteq& \cap_{j' \in L_{z} \setminus \{j\}} N_{j'}, \text{ for all } i \in [r]
	\eean
	and that,
	\bean
    \text{Support}(\mathbf{v}_{i,j}(z)) \cap \text{Support}(\mathbf{v}_{i,j}(z')) = \phi \text{ for any } z, z' \in N_j, \ z \ne z', i \in [r].
	\eean
%

\eprf

In this Lemma we look at some necessary conditions on the generator matrix coefficients. One of these conditions follow due to the MDS property of the code.
\blem
Let us consider the setting of Theorem~\ref{thm:struct}. The following properties are also necessary along with the support structure presented in Theorem~\ref{thm:struct}.
\ben
\item $a_{i,j,z}$ is diagonal element of the matrix $A_{i,j}$ for $i \in [r]$ and $j \in [sp+1:k]$. For every $z \in [\alpha]$, the matrix
	\bea
	\label{eq:qz}Q_{z} &=& \left[\begin{array}{cccc}
		a_{1,sp+1,z} & a_{1, sp+2, z} & \cdots & a_{1, k, z}\\
		a_{2,sp+1,z} & a_{2, sp+2, z} & \cdots & a_{2, k, z}\\
		\vdots & \vdots & \ddots & \vdots\\
		a_{r,sp+1,z} & a_{r, sp+2,z} & \cdots & a_{r, k, z}\\
	\end{array}\right],
	\eea
	is of dimension $r \times (k-sp)$ and has the property that every square sub-matrix is invertible.
	\item  For any $z \in N_j$, the following matrix:
		\bean
		\left[ \mathbf{v}_{1,j}(z) \ \mathbf{v}_{2,j}(z) \ \cdots, \ \mathbf{v}_{r,j}(z) \right]
		\eean
		when restricted to non-zero rows results in the generator matrix of an $[r,s]$ MDS code over $\mathbb{F}_q$.
\een
\elem
\bprf
Let us assume that $E$ is a set of $|E|=r$ erasures such that $E \cap [sp] = \phi$, ie., all the erasures seen in the first $k$ nodes belong to the set $[sp+1:k]$. Let $E \cap [sp+1:k] = \{j_1, \cdots, j_{\ell}\}$ and $[k+1:n] \setminus E = \{i_1, \cdots, i_{\ell} \}$ where $\ell \le \min\{r, k-sp\}$. Then to be able to recover from these $E$ erasures it is necessary that the $\ell \alpha \times \ell \alpha$ submatrix is invertible:
	\bean
	\left[\begin{array}{ccc}
		A_{i_1-k, {j_1}} & \cdots & A_{i_1-k, {j_{\ell}}}\\		
		\vdots & \ddots & \vdots\\
		A_{i_{\ell}-k, {j_1}} & \cdots & A_{i_{\ell}-k, {j_{\ell}}}\\
	\end{array}\right]
	\eean
	needs to be invertible for the MDS property to hold. The block matrices above are all diagonal, therefore the above matrix is invertible iff for every $z \in [\alpha]$ the matrix:
	\bean \left[\begin{array}{ccc}
		a_{i_1-k, j_1,z} & \cdots & a_{i_1-k, j_{\ell},z}\\		
		\vdots & \ddots & \vdots\\
		a_{i_{\ell}-k, j_1,z} & \cdots & a_{i_{\ell}-k, j_{\ell},z}\\		
	\end{array}\right] \text{ is invertible},
	\eean
	implying that any square submatrix of $(r \times (k-sp))$ martix $Q_{z}$ shown in equation~\eqref{eq:qz} is invertible.
\eprf
\subsection{Proof of Theorem~\ref{thm:repss_struct}, The Repair Matrix Structure}

In the following lemma we will show that the $w = sp$ repair subspaces can be grouped in to $p$ sets with each set containing $s$ repair subspaces. Within each group any two repair subspaces are disjoint and the direct sum of $s$ repair subspaces within a any group is equal to $\fq^{\alpha}$.

\blem\label{lem:p_partitions}
 Let $s = d-k+1\le r,   r=n-k, p$ be positive integers. Consider an $[n,k]$ MDS code over the vector alphabet $\fq^{\alpha}$ where $\alpha=s^p$. Let, $[1:w]$ be a set of $w=sp \le k$ nodes such that any node in this set can be repaired, through linear repair, by accessing and downloading, precisely $\beta=\frac{\alpha}{s}$ symbols over \fq \  from each of the set of $d$ helper nodes. The dimension of intersection of $j$ repair sub-spaces with each picked from a different group is given by $s^{p-j}$.

Let, $S_{j}$, $j \in [w]$ be the repair matrices corresponding to the $w$ nodes that can be repaired optimally, then there exists a partitioning $P_1, \cdots, P_p$ of the set $[sp]$ such that $|P_m| = s$ for all $m \in [p]$ and,
\bea
\label{eq:part_int1} \oplus_{j \in P_m} <S_{j}> &=& \fq^{\alpha} \text{ for any } m \in [p],\\ 
\label{eq:part_int} \text{dim}(\cap_{j \in J} <S_{j}>) &=& s^{p-|J|} \text{ for any } J \subseteq [sp] \text{ such that } |J| \le p, |J \cap P_m| \le 1 \text{ for all } m \in [p].
\eea
\elem

In order to prove this lemma we will now introduce some definitions.

\bdefn[Partitioning of R]
We say that the collection of repair matrices $S_1, \cdots, S_s$ partition a subspace $R$ if the following condition holds:
\bean
\oplus_{i=1}^s \left( <S_i> \cap \ R \right) = R.
\eean
\edefn

\emph{Proof of Lemma~\ref{lem:p_partitions}:}\\
Let us consider an arbitrary $[z \in \alpha]$ and let
\bean
L_{z} = \{ j \in [sp] \mid e_{z} \in <S_j>\} = \{j_1, \cdots, j_p\}.
\eean
as from Lemma~\ref{lem:intersect_card}, it follows that $|L_z|=p$. We now define subspace $R_y$ for any $y \in [p]$ as shown below:
\bean
R_y = \begin{cases}
\cap_{i=1}^{p-y} <S_{j_i} > & y < p\\
\fq^{\alpha} & \text{otherwise}.
\end{cases}
\eean
We will now come up with a set $P_1$ containing $s$ elements and show that $\{S_i \mid i \in P_1 \}$ partition the space $\fq^{\alpha}$. We do this inductively by first showing that $\{S_i \mid i \in P_1 \}$ partitions $R_1$ and then by showing that partitioning $R_{y}$ implies partitioning of $R_{y+1}$ for any $y \in [p-1]$. Since $R_p = \fq^{\alpha}$ this gives the desired result. We will later show that we can get $s$ element sets $P_2, \cdots, P_p$ that satisfy this property as well.

We will now proceed to identify $P_1$ and show that $\{S_i \mid i\in P_1\}$ partitions $R_1$. Looking at the bipartite graph (see Fig.~\ref{fig:bipart_r1}) whose left nodes are $\{e_{z'} \mid z' \in R_1\}$ and the right nodes are 
\bean \left\lbrace S_{i} \mid i \in P_1 \right\rbrace, \ \text{ where }  P_1 = \left\lbrace i \in [sp] \setminus \{j_1, \cdots, j_{p-1}\} \ \mid \ <S_{i}> \cap R_1 \ne \phi \right\rbrace. \eean
\begin{figure}[ht!]
	\bc
	\includegraphics[width=0.3\textwidth]{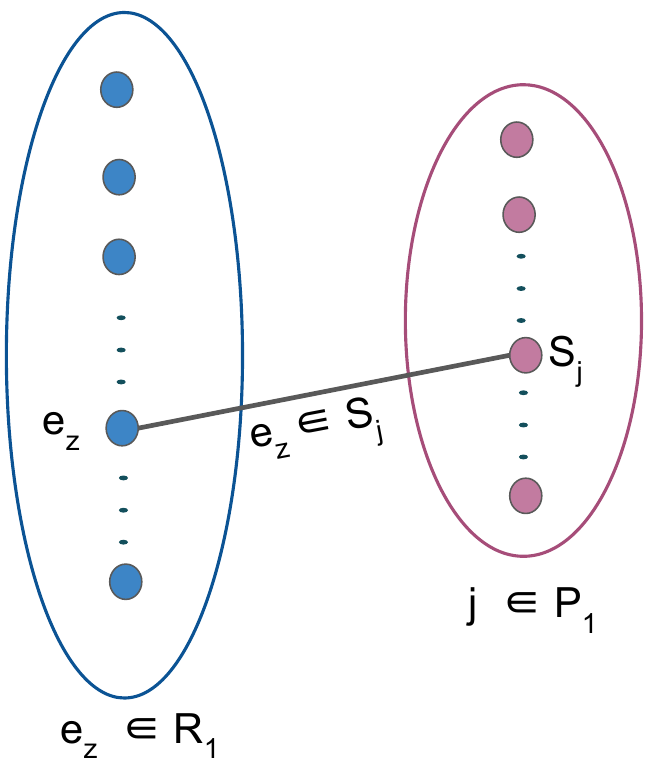}
	\caption{\label{fig:bipart_r1}$\{S_i \mid i \in P_1\}$ partitions subspace $R_1$}
	\ec
\end{figure}
It is clear that the number of left nodes is equal to $s$ as $\mathbf{e}_z \in R_1$ from Lemma~\ref{lem:intersect_card}. The left degree of node $z' \in R_1$ is given by $|L_{z'} \cap P_1|$. By the definition of $R_1$, $\{ j_1, \cdots, j_{p-1} \} \subseteq L_{z'}$ but $\{ j_1, \cdots, j_{p-1} \} \cap P_1 = \phi$. Therefore the left degree is equal to $|L_{z'}| - (p-1) = 1$ by Lemma~\ref{lem:intersect_card}. The right degree for any $i \in P_1$ is given by:
\bean
\text{dim} \left( <S_{i}> \cap \ R_1 \right) &=& 1 \text{ by Lemma} \ref{lem:intersect_card}.
\eean
Therefore the number of right nodes is equal to $|P_1| = |R_1| = s$. It now follows that:
\bean
\oplus_{i \in P_1} \left( <S_{i}> \cap \ R_1 \right) &=&  R_1
\eean
Therefore $\{S_{i} \mid i \in P_1 \}$ \emph{partitions} the space $R_1$. We will now show that if  $\{S_{i} \mid i \in P_1 \}$ partitions the space $R_y$ for any $y \in [p-1]$, it partitions the larger space $R_{y+1}$. By that assumption, we have that:
\bean
\oplus_{i \in P_1} \left( <S_{i}> \cap \ R_y \right) = R_y.
\eean
Let us define 
\bean 
N_i &=& \{ z \in [\alpha] \mid \mathbf{e}_z \in <S_i>\} \text{ for all } i \in [sp]\\
I_y &=& \{ z \in [\alpha] \mid \mathbf{e}_z \in R_y \} \text{ for all } y \in [p].
\eean
From the definition of $R_y$ it is clear to see that:
\bean
I_y = \begin{cases}
	\cap_{i=1}^{p-y} N_{j_i} & y < p\\
	[\alpha] & y = p,
 \end{cases}
\eean
and $|I_y| = s^y$. From \eqref{eq:row_sup3} in Theorem\ref{thm:struct} we have that:
\bea
\nonumber \cup_{i_0=1}^r  \text{Support}(\mathbf{v}_{i_0,j_{p-y}}(z)) &=& \cap_{j \in L_z \setminus \{ j_{p-y}\}} N_j \text{ for any } z \in N_{j_{p-y}}.
\eea
Notice that $I_y \subseteq N_{j_{p-y}}$. Therefore, for any $i \in P_1$ from the above equation it follows that:
\bea
\nonumber \cup_{z \in N_{i} \cap I_y }  \cup_{i_0=1}^r \text{Support}(\mathbf{v}_{i_0,j_{p-y}}(z)) &=& \cup_{z \in N_{i} \cap I_y } \left( \cap_{j \in L_z \setminus \{ j_{p-y} \}} N_j \right)\\
\label{eq:prop1}&\subseteq& N_{i} \cap \left(\cap_{i=1}^{p-y-1} N_{j_i}\right) = N_i \cap I_{y+1},
\eea
$\text{as } \{i\} \cup \{j_1, \cdots, j_{p-y}\} \subset L_z \text{ for } z \in N_i \cap I_y$. But we know that cardinality of the LHS is $|N_{i} \cap I_y|*s = s*s^{y-1}=s^y$ as the sets $\cup_{i_0=1}^r \text{Support}(\mathbf{v}_{i_0,j_{p-y}}(z))$ are disjoint across different $z \in N_{j_{p-y}}$ from Theorem~\ref{thm:struct}. This is same as the cardinality of RHS from Lemma~\ref{lem:intersect_card}. Therefore from equation \eqref{eq:prop1} we have,
\bea
\nonumber \cup_{z \in N_i \cap I_y} \cup_{i_0=1}^r \text{Support}(\mathbf{v}_{i_0,j_{p-y}}(z)) &=& N_{i} \cap I_{y+1} \text{ for all } i \in P_1.\\
\nonumber \cup_{i \in P_1 } \cup_{z \in N_{i} \cap I_y } \cup_{i_0=1}^r \text{Support}(\mathbf{v}_{i_0,j_{p-y}}(z)) &=& \cup_{i \in P_1} N_{i} \cap I_{y+1}\\
\label{eq:prop2}&\subseteq& I_{y+1}.
\eea
By the induction assumption the set $\{S_i \mid i \in P_1\}$ partitions $R_y$. Therefore for any $i, i' \in P_1$, 
\bean
\left(<S_i> \cap R_y\right) \cap \left( <S_{i'}> \cap R_y \right) &=& \phi\\
\implies ( N_i \cap I_{y} ) \cap ( N_{i'} \cap I_{y} ) &=& \phi.
\eean
Therefore the cardinality of LHS in equation~\eqref{eq:prop2} is $|P_1|*|N_i \cap I_y|*s = s^{y+1}$ due to Theorem~\ref{thm:struct}. This is equal to the cardinality of $I_{y+1}$ in the RHS from Lemma~\ref{lem:intersect_card}. Therefore from equation \eqref{eq:prop2} we have that,
\bean
\cup_{i \in P_1} (N_{i} \cap I_{y+1})
&=& I_{y+1}\\
\oplus_{i \in P_1} \left( <S_{i}> \cap R_{y+1} \right) &=&  R_{y+1}.
\eean
It is therefore clear that,  the collection $\{S_{i} \mid i \in P_1\}$ partitions the set $\fq^{\alpha}$ by setting $y=p-1$. We will now prove that there exist $p$ such sets by induction. Let us assume that we have $y$ such sets $P_1, P_2 \cdots, P_y$ such that for any $y' \in [y]$ 
\bean
\oplus_{i \in P_{y'}} <S_{i} > = \fq^{\alpha} 
\eean
then we will show that there is a set $P_{y+1}$ such that $\{S_i \mid i \in P_{y+1}\}$ partitions $\fq^{\alpha}$. 

We will first show that if one element each is picked from each partition they have a non-zero intersection i.e.,
\bean
\cap_{y'=1}^y <S_{i_{y'}}> \ne \phi \text{ for any } i_{y'} \in P_{y'}.
\eean
For $y = 1$, there trivially holds true. Assume that it is true for $y=t-1$:
\bean
\cap_{y'=1}^{t-1} N_{i_{y'}} &\ne& \phi  \text{ for any } i_{y'} \in P_{y'}\\
\implies  |\cap_{y'=1}^{t-1} N_{i_{y'}}| &=& s^{p-t+1} (\text{by Lemma \ref{lem:intersect_card}})
\eean
Looking at the bipartite graph with left nodes as elements in $\{ e_{z'} \mid z' \in \cap_{y'=1}^{t-1} N_{i_{y'}} \}$ and right nodes as elements in
\bean
\left\lbrace S_{j} \mid <S_{j}> \cap \left(\cap_{y'=1}^{t-1}<S_{i_{y'}}>\right) \ne \phi, \ j \in [sp] \setminus \{ i_1, i_2, \cdots, i_{t-1} \} \right\rbrace.
\eean
Number of left nodes is equal to $s^{p-t+1}$. The left degree of this graph is $p-t+1$ and the right degree is $s^{p-t}$. Therefore the number of right nodes is given by:
\bean
\frac{1}{s^{p-t}}(p-t+1)s^{p-t+1} &=& s(p-t+1).
\eean
It is clear to see that 
\bean
<S_{i}> \cap \left(\cap_{y'=1}^{t-1} <S_{i_{y'}}>\right) &=& \phi \text{ for any } i \in P_1 \cup P_2 \cdots \cup P_{t-1} \setminus \{i_1, \cdots, i_{t-1}\} .
\eean
Therefore we have $(s-1)(t-1)$ repair matrices that are disjoint with $ \cap_{y'=1}^{t-1} <S_{i_{y'}}>$ and none of them can be part of the right nodes of the bipartite graph. Therefore all the remaining nodes $sp-(t-1)-(s-1)(t-1)=s(p-t+1)$ that are not part of the $t-1$ disjoint sets $P_1, \cdots, P_{t-1}$ have non zero intersection. Given we have any $s$ element set $P_t$ disjoint with $P_1, \cdots, P_{t-1}$, it follows that: 
\bean
\cap_{y'=1}^{t} <S_{i_{y'}}> &\ne& \phi \text{ for any } i_{y'} \in P_{y'}
\eean
implying the equation~\eqref{eq:part_int} in the lemma statement.
We will use this to construct the set $P_{y+1}$ given there are sets $P_1, \cdots, P_{y}$ such that for any $y' \in [y]$, $\{S_i \mid i \in P_{y'}\}$ partitions $\fq^{\alpha}$. Let $\mathbf{e}_{z} \in \cap_{y'=1}^y S_{i_{y'}}$ such that $i_{y'} \in P_{y'}$ and $L_{z} = \{j_1, j_2 \cdots, j_p\}$ where $j_{y'} = i_{y'}$ for all $i \in [y]$. Looking at the bipartite graph with left nodes as elements from $R_1 = \cap_{i=1}^{p-1} <S_{{j_i}}>$ and right nodes as elements in 
\bean
\left\lbrace S_{i} \mid  i \in P_{y+1} \right\rbrace, \ \text{ where } P_{y+1} = \left\lbrace j \in [sp] \setminus \{j_1, \cdots, j_{p-1}\} \mid <S_{j}> \cap \ R_1 \ne \phi \right\rbrace.
\eean
Following a similar argument that shows that $\{S_i \mid i\in P_1\}$ partitions $\fq^{\alpha}$ it can be shown that $\{S_i \mid i\in P_{y+1}\}$ partitions $\fq^{\alpha}$. It can be verified that the $P_{y+1}$ that is picked is disjoint from the sets $P_1, \cdots, P_{y}$. By setting $y = p-1$, we have $p$ sets $P_1, \cdots, P_p$ that satisfy the properties stated in the theorem.
\eprf

\ \\
\bprf of Theorem~\ref{thm:repss_struct}
We can assume without loss of generality that the $p$ partitions are defined by $P_y = \{ys+x+1 \mid x \in \Zs\}$ $\{S_{ys+x+1} | x \in \Zs\}$ form a partition $P_y$. By varying $y \in [0,p-1]$, there are $p$ partitions. From Lemma \ref{lem:p_partitions} it follows that:
\bean
<S_{y s + x + 1}>  \cap <S_{y s + x' + 1}> &=& \phi \text{ for any } x \ne x' \in [0:s-1] \text{ from equation~\eqref{eq:part_int1}}
\eean
Let $z, z' \in [s^p]$ such that $z \ne z'$ where $(z_0, \cdots, z_{p-1})$ and $(z_0',\cdots, z_{p-1}')$ are $s$-array representations of $z-1$, $z'-1$ respectively, then it follows that:
\bean
\text{dim} (\cap_{y=0}^{p-1} <S_{y s + z_y + 1}>) &=& 1 \text{ from equation~\eqref{eq:part_int}}\\
\left(\cap_{y=0}^{p-1} <S_{y s + z_y' + 1}>\right) \cap \left(\cap_{y=0}^{p-1} <S_{y s + z_y + 1}>\right) &=& \cap_{y=0}^{p-1} (<S_{y s + z_y' + 1}> \cap <S_{y s + z_y + 1}> )\\
&=& \phi \text{ as there is a } y \in [0:p-1] \text{ such that } z_y \ne z_y'\\
\implies \cup_{z \in [s^p]} \left(\cap_{y=0}^{p-1} <S_{y s + z_y + 1}>\right) &=& \mathbb{F}_q^{\alpha}
\eean
Let $N_j = \{z \mid \mathbf{e}_z \in <S_j>\}$. We can define a permutation $\pi : [\alpha] \rightarrow [\alpha]$ as follows: 
\bean
\pi(\cap_{y=0}^{p-1} N_{ys+z_y+1}) = z
\eean
for any $z \in [s^p]$. Applying this permutation over the elements seen in the repair matrices, we get the repair subspace assignment shown in \eqref{eq:rep_subspace}.
\eprf
%
\section{ Deduced Structure from Theorems~\ref{thm:struct} and \ref{thm:repss_struct}\label{sec:clay}}
In this section we describe how Theorems~\ref{thm:struct} and \ref{thm:repss_struct} determine the support structure of parity-check equations that define the optimal access MDS codes with optimal sub-packetization level. We first describe the structure for an example and then provide a general description.

\subsection{Deduced structure of MDS code with optimal-access repair of $w=s$ nodes, where $\alpha=s$}

We restrict our attention to the case when, $d=n-1$, $w=r=s$ and $s \leq k$. From Theorem \ref{thm:struct}, we get the general form of 
$\{A_{p_i,u_j}\}$. Since the generator matrix is systematic, we can directly write down the parity-check matrix.  In the following, we show that this parity-check matrix can be written such that the construction can be viewed to have a coupled-layer structure, that is present in the \cite{YeBarg_2017},\cite{SasVajKum},\cite{LiTangTian}. We do note however, that we are only considering the case where $w$ nodes are repaired, not all the nodes as in the case of \cite{YeBarg_2017, SasVajKum}. For simplicity we illustrate the connection via an example. A general description of the support structure for $w=sp$ case where $p \le \lfloor \frac{k}{s}\rfloor$ is explained in next sub-section. Let $w=s=3,r=3,k=3$, so that $\alpha=s=3$. Let $U=[1,k]$ and $P=[k+1,n]$.\\
From Theorem~\ref{thm:repss_struct} it follows that:
\bean
S_j = <\mathbf{e}_j> \text{ for all } j \in [3].
\eean
Now from Theorem~\ref{thm:struct}, we obtain that for any $i, j, z \in [3]$:
\bean
\mathbf{v}_{i,j}(z) &=& \begin{cases}
	[v_{i,j,1}, \ v_{i,j,2}, \ v_{i,j,3}]^T & z = j \\
	a_{i,j,z}\mathbf{e}_{z} & z \ne j
\end{cases} 
\eean
Therefore for any $i \in [3]$:
\bea
A_{i,1}=\left[
\begin{array}{ccc}
	v_{i,1,1} & v_{i,1,2} & v_{i,1,3} \\
	0 & a_{i,1,2} & 0 \\
	0 & 0 & a_{i,1,3} \\
\end{array} 
\right], \ A_{i,2}=\left[
\begin{array}{ccc}
	a_{i,2,1} & 0 & 0\\
	v_{i,2,1} & v_{i,2,2} & v_{i,2,3} \\
	0 & 0 & a_{i,2,3}
\end{array} 
\right],\ A_{i,3}=\left[
\begin{array}{ccc}
	a_{i,3,1} & 0 & 0\\
	0 & a_{i,3,2} & 0 \\
	v_{i,3,1} & v_{i,3,2} & v_{i,3,3} 
\end{array} 
\right].\ \label{P9}
\eea
The parity check matrix is given by:
\bea
H = \left[
\begin{array}{ccc|ccc}
	A_{1,1} & A_{1,2} & A_{1,3} & -I_3 & & \\
	A_{2,1} & A_{2,2} & A_{2,3} & & -I_{3} & \\
	A_{3,1} & A_{3,2} & A_{3,3} & & & -I_3 \\
\end{array} 
\right]. \ \ \label{P10}
\eea
%

Upon substituting for the $A_{i,j}$ for $i \in [3], j \in [3]$, we obtain:
\bean
H = \left[
\scalemath{0.8}{
	\begin{array}{ccc|ccc|ccc|c|c|c}
		v_{1,1, 1}& v_{1,1, 2} & v_{1,1, 3}& a_{1,2,1} &  &  & a_{1,3,1}  &  &   & \multirow{3}{*}{$-I_3$} & &\\
		& a_{1,1,2} &  & v_{1,2, 1} & v_{1,2,2} & v_{1,2,3} &  &  a_{1,3,2}  &  &  & & \\
		& & a_{1,1,3} &  &  & a_{1,2,3} & v_{1,3, 1} & v_{1,3, 2} & v_{1,3, 3}   & &  &\\
		\hline
		v_{2,1,1} & v_{2,1, 2} & v_{2,1, 3}& a_{2,2,1} &  &  & a_{2,3,1} &  &  &  & \multirow{3}{*}{$-I_3$} & \\
		& a_{2,1,2} &  & v_{2,2,1} & v_{2,2,2} & v_{2,2,3} &  &  a_{2,3,2} & &  &  &\\
		&  & a_{2,1,3} &  &  & a_{2,2,3} & v_{2,3,1} & v_{2,3,2} & v_{2,3,3} &  &   &\\
		\hline
		v_{3,1,1} & v_{3,1,2} & v_{3,1,3}& a_{3,2,1} &  & & a_{3,3,1} &  &  & & & \multirow{3}{*}{$-I_3$}\\
		& a_{3,1,2} &  & v_{3,2, 1} & v_{3,2,2} & v_{3,2,3} & &  a_{3,3,2} &  & &   &\\
		& & a_{3,1,3} &  &  & a_{3,2,3} & v_{3,3,1} & v_{3,3,2} & v_{3,3, 3} &  & &
	\end{array} 
}
\right].
\eean
$H$ is an $r \alpha \times n\alpha$ matrix with $r$ thick rows and $n$ thick columns, where each thick row has $\alpha$ rows and each thick column has $\alpha$ columns. The rows and columns of $H$ can be permuted to obtain $H_{\text{perm}}$ where the elements that appear in the first block of $H_{\text{perm}}$ are  obtained from the elements at first row, first column within each block of $H$. $H_{\text{perm}}$ has $\alpha$ thick rows and thick columns, where each thick row has $r$ rows and each thick column has $n$ columns. 

\bean
H_{\text{perm}} = \left[
\scalemath{0.8}{
	\begin{array}{cccc|cccc|cccc}
		v_{1,1, 1} & a_{1,2,1} & a_{1,3,1} & & v_{1,1, 2} & & &  & v_{1,1, 3} & &  & \\
		v_{2,1, 1} & a_{2,2,1} & a_{2,3,1} & -I_3 & v_{2,1, 2} &  & & & v_{2,1, 3}  & & &\\
		v_{3,1, 1} & a_{3,2,1} & a_{3,3,1} & & v_{3,1,2} & & &  & v_{3,1, 3} & &  &\\
		\hline
		& v_{1,2, 1} & & &  a_{1,1,2} & v_{1,2, 2} & a_{1,3,2} &  & & v_{1,2,3} & & \\
		& v_{2,2,1} & & &  a_{2,1,2} & v_{2,2, 2} & a_{2,3,2}& -I_3 & & v_{2,2, 3} & & \\
		& v_{3,2,1} & & & a_{3,1,2} & v_{3,2, 2} & a_{3,3,2} & & & v_{3,2,3} & &\\
		\hline
		& & v_{1,3, 1} & & & & v_{1,3, 2} & &  a_{1,1,3} & a_{1,2,3} & v_{1,3,3} &  \\
		& & v_{2,3,1}  & & & & v_{2,3,2} & &  a_{2,1,3} & a_{2,2,3}  & v_{2,3, 3} &  -I_3 \\
		& & v_{3,3,1} & & & & v_{3,3,2} & &  a_{3,1,3} & a_{3,2,3} & v_{3,3,3}  &
	\end{array} 
}
\right]
\eean
This structure was first observed in the MSR constructions in \cite{AgaSasKum} and \cite{SasAgaKum}. The construction in \cite{AgaSasKum}, allowed optimal access repair of systematic nodes and had optimal sub-packetization $\alpha=r^{\frac{k}{r}}$, whereas the construction in \cite{SasAgaKum} allowed optimal access repair of all the nodes with a sub-packetization of $\alpha=r^{\lceil\frac{n}{r}\rceil}$. However, both the constructions were not explicit and were defined over large field size and limited to the case $d = n-1$. In \cite{RawKoyVis_msr}, the authors extended the construction from \cite{SasAgaKum} to support any $d < n-1$, with sub-packetization $\alpha=s^{\lceil\frac{n}{s}\rceil}$. This construction is not explicit as well, but it is the first one to have the structure of an optimal sub-packetization MSR code for any $(n, k, d)$.
This is followed by an explicit MSR code called coupled layer code construction that was independently discovered in \cite{YeBarg_2017}, \cite{SasVajKum}, \cite{LiTangTian}. 

\subsubsection{Selecting coefficients for the coupled layer code} Let $P=\left[\begin{array}{ccc}\underline{p}_1 & \cdots & \underline{p}_k \end{array}\right]$ be an $r \times k$ matrix such that $\left[\begin{array}{cc}
P & I_r
\end{array}\right]$ forms the parity check matrix of an $[n, k]$ MDS code, $p_{i,j}$ be the element at $i$-th row and $j$-th column in $P$.
By letting $a_{i, j, z} = p_{i, j}$ and $v_{i, j, j}=p_{i, j}$, $v_{i, j, j'}(z) = \gamma p_{i, j'}$ when $j\ne j'$, we get:
\bean
H_{\text{perm}}=\left[
\begin{array}{cccc|cccc|cccc}
	\underline{p}_1 & {\color{red}\underline{p}_2} & {\color{blue}\underline{p}_3}  & I_3 & {\color{red}\gamma \underline{p}_2} & &  & & {\color{blue}\gamma \underline{p}_3}& & & \\
	& 	{\color{red}\gamma \underline{p}_1} &  & & {\color{red}\underline{p}_1} & \underline{p}_2 & {\color{green}\underline{p}_3} & I_3 & & {\color{green}\gamma \underline{p}_3} \\
	& & {\color{blue}\gamma \underline{p}_1} & &  & & {\color{green}\gamma \underline{p}_2}  & & {\color{blue}\underline{p}_1} & {\color{green}\underline{p}_2} & \underline{p}_3  & I_3\\
\end{array} 
\right]
\eean
The coupled nature of the construction is now apparent. Columns with same color correspond to coupled symbols. In this case, there are three pairs of coupled symbols. This construction will be elaborated in detail in the following section.
%
%


\subsection{Deduced structure of MDS code with optimal-access repair of $w=sp$ nodes, where $\alpha=s^p$ for $p \le \lfloor \frac{k}{s} \rfloor$}

Let the parameters of MDS code be $(n=st, k=s(t-1), d=n-1)$ with optimal sub-packetization $\alpha = s^p$ as the MDS code can repair any single node within the set of $w=sp$ nodes by optimal access, where $p \in [t-1]$. In this section, we will use the two-tuple $(x, y)$ to represent a node with index $ys+x+1$ for $x \in [0:s-1]$ and $y \in [0:t-1]$. We will from now on use $\{\mathbf{e}_{\uz} \mid  \uz \in \mathbb{Z}_s^p \}$ to represent the standard basis of $\fq^{\alpha}$.


We will now use the support structure of parity matrix derived in Theorem~\ref{thm:struct} along with the repair matrix structure forced due to Theorem~\ref{thm:repss_struct} to show that the resultant parity structure is same as that of the construction presented in \cite{LiTangTian}. 
Let $\left(\mathbf{C}(0,0), \cdots, \mathbf{C}(s-1,t-1)\right)$ be a codeword where $\mathbf{C}(x,y) \in \fq^{\alpha}$ and $\mathbf{C}(x,y) = (\cxyz \mid \uz \in \mathbb{Z}_s^p)$. The $s*\alpha$ parity symbols are defined by the generator matrix as:
\bean
\left[\begin{array}{c}
	\mathbf{C}(0,t-1)\\
	\mathbf{C}(1,t-1)\\
	\vdots\\
	\mathbf{C}(s-1,t-1)\\
\end{array}\right] &=& \underbrace{\left[\begin{array}{cccc}
	A_{1, 1} & A_{1, 2} & \cdots & A_{1, k}\\
	A_{2, 1} & A_{1, 2} & \cdots & A_{1, k}\\
	\vdots & \ddots & \ddots & \vdots\\
	A_{n-k, 1} & A_{n-k, 2} & \cdots & A_{n-k, k}\\
\end{array}\right]}_{(s \alpha \times k \alpha)} \underbrace{\left[\begin{array}{c}
	\mathbf{C}(0,0) \\
	\mathbf{C}(1,0) \\
	\vdots\\
	\mathbf{C}(s-1, t-2)
\end{array}\right]}_{(k\alpha \times 1)}.
\eean
Let $\mathbf{v}_{i,j}(\uz)$ be $\uz$-th row of matrix $A_{i, j}$ and $\uz(x \rightarrow z_y) = (z_0, \cdots, z_{y-1}, x, z_{y+1}, \cdots, z_{p-1})$. For all $\ell \in [0:s-1]$ and $\uz \in \mathbb{Z}_s^p$:
\bea
\nonumber C(\ell,t-1;\uz) &=& \sum\limits_{y=0}^{t-2} \sum\limits_{x=0}^{s-1} \mathbf{v}_{\ell, ys+x+1}^T (\uz) \mathbf{C}(x,y)\\
\nonumber &=&   \sum\limits_{y=0}^{p-1}  \sum\limits_{x=0}^{s-1} \mathbf{v}_{\ell, ys+x+1}^T (\uz) \mathbf{C}(x,y) +   \sum\limits_{y=p}^{t-2}\sum\limits_{x=0}^{s-1} \mathbf{v}_{\ell, ys+x+1}^T (\uz) \mathbf{C}(x,y)\\
\nonumber &=& \sum\limits_{y=0}^{p-1}\sum\limits_{x=0, x \ne z_y}^{s-1} \mathbf{v}_{\ell, ys+x+1}^T (\uz) \mathbf{C}(x,y) + \sum\limits_{y=0}^{p-1} \mathbf{v}_{\ell, ys+z_y+1}^T (\uz) \mathbf{C}(z_y,y) +\\ \nonumber &&   \sum\limits_{y=p}^{t-2}\sum\limits_{x=0}^{s-1} \mathbf{v}_{\ell, ys+x+1}^T (\uz) \mathbf{C}(x,y)\\
\label{eq:pc_sup}&=& \sum\limits_{y=0}^{p-1} \sum\limits_{x=0: x \ne z_y}^{s-1} X \cxyz + \sum\limits_{y=0}^{p-1} \sum\limits_{x =0}^{s-1} X \cxyzc+\sum\limits_{y=p}^{t-2}\sum\limits_{x=0}^{s-1} X \cxyz\\
\nonumber &=& \sum\limits_{y=0}^{t-2} \sum\limits_{x=0}^{s-1} X \cxyz + \sum\limits_{y=0}^{p-1} \sum\limits_{x=0: x \ne z_y}^{s-1} X \cxyzc
\eea
In \eqref{eq:pc_sup}, $X$ represents placement for a coefficient, and this equation as we will see, follows due to Theorem \ref{thm:struct}. Firstly, from the Theorem~\ref{thm:repss_struct}, the structure of repair matrices of MDS codes that allow for optimal-access repair of single node within set of $w=sp$ nodes, $[1:w]$ and have optimal sub-packetization level is given by:
\bean
<S_{ys+x+1}> = \text{span}\{\mathbf{e}_{\uz} \mid \uz \in \mathbb{Z}_s^p, z_y = x\}, \ \forall y \in [0:p-1], \forall x \in [0:s-1]. 
\eean
Let,
\bean
N_j &=& \{ \uz \in \mathbb{Z}_s^p \mid e_{\tiny \uz} \in S_j \} \text{ for all } j \in [sp]\\
\text{Therefore, } N_{ys+x+1} &=& \{\uz \in \mathbb{Z}_s^p \mid  z_y = x\} \text{ for any } x \in [0:s-1], y \in [0:p-1].
\eean
Secondly, from Theorem~\ref{thm:struct} it follows that:
\ben
\item For $y \in [0:p-1]$, when $x \ne z_y$, it is clear that $\uz \notin N_{ys+x+1}$, therefore from equation~\ref{eq:row_sup0}
\bean
\text{ Support}(\mathbf{v}_{\ell, ys+x+1} (\uz)) = \{\uz\} \text{ for all } \ell \in [0:s-1].
\eean
\item For $y \in [0:p-1]$, when $x = z_y$, it is clear that $\uz \in N_{ys+x+1}$, therefore from equation~\ref{eq:row_sup1}
\bean
\text{Support}(\mathbf{v}_{\ell, ys+z_y+1}) &\subseteq& \cap_{j \in L_{\tiny \uz} \setminus \{ys+z_y+1\}} N_j \text{ where } L_{\tiny \uz} = \{ j \mid  \uz \in N_j \} = \{y's+z_{y'}+1 \mid y' \in [0:p-1] \}\\
&=& \cap_{y' \in [0:p-1] \setminus \{y\}} N_{y's+z_{y'}+1} \\
&=& \left\lbrace \uz(x \rightarrow z_y) \mid x \in [0:s-1] \text{ for all } \ell \in [0:s-1]. \right\rbrace
\eean
\item For $y \in [p:t-2]$, the nodes $ys+x+1 \in [sp+1: k]$ for some $x \in [0:s-1]$, therefore from equation~\eqref{eq:row_sup2}, 
\bean
\text{Support}(\mathbf{v}_{\ell, ys+x+1} (\uz)) = \{\uz\} \text{ for all } \ell \in [0:s-1].
\eean
\een
This explains how the equation~\eqref{eq:pc_sup} is obtained.
\subsubsection*{Coefficient Assignment For Explicit Optimal-Access MDS Code from \cite{LiTangTian}}\ \\
For all $\ell \in [0:s-1]$, $\uz \in \mathbb{Z}_s^p$, let us the $ r \alpha = s*s^p $ parity check equations be defined as follows:
\bea
\label{eq:oamds_pc} C(\ell,t-1;\uz) &=& \sum\limits_{y=0}^{t-2} \sum\limits_{x=0}^{s-1} \theta_{(x,y), \ell} \cxyz + \sum\limits_{y=0}^{p-1} \sum\limits_{\substack{x \in [0:s-1],\\ x \ne z_y}} \gamma \theta_{(x,y), \ell}  \cxyzc,
\eea
where $\gamma \in \fq^*$ and $\gamma^2 \ne 1$ and $\theta_{(x,y),\ell}$ are chosen such that 
\bea
\label{eq:theta}\Theta = \left[\begin{array}{cccc|cccc}
	1 & & & & \theta_{(0, 0),0} & \theta_{(0,0), 1} & \cdots & \theta_{(0,0),s-1}\\
	& 1 & & & \theta_{(1, 0),0} & \theta_{(1,0), 1} & \cdots & \theta_{(1,0),s-1}\\
	& & \ddots & & \vdots & \vdots & \ddots & \vdots\\
	& & & 1 & \theta_{(s-1, t-2),0} & \theta_{(s-1,t-2), 1} & \cdots & \theta_{(s-1,t-2),s-1}\\
\end{array} \right]
\eea
forms the generator matrix of $[n=st, k=s(t-1)]$ MDS code over \fq with $q \ge n-1$.

\underline{Optimal-Access Repair Property} Let $(x_0, y_0)$ be the node to be repaired for some $x_0 \in [0:s-1]$ and $y_0 \in [0:p-1]$, then the helper information sent by the the helper node $(x,y) \ne (x_0, y_0)$ where $x \in [0:s-1], \ y \in [0:t-1]$ is given by:
\bean
\{ \cxyz \mid \uz \in \mathbb{Z}_s^p, z_{y_0} = x_0 \}
\eean
For $\ell \in [0:s-1]$, $\uz \in \mathbb{Z}_s^p$ such that $z_{y_0} = x_0$:
\bea
\nonumber \sum\limits_{y=0}^{t-2} \sum\limits_{x=0}^{s-1} \theta_{(x,y), \ell} \cxyz + \sum\limits_{y=0}^{p-1} \sum\limits_{\substack{x \in [0,s-1]\\ x \ne z_y}} \gamma \theta_{(x,y), \ell}  \cxyzc &=& C(\ell,t-1;\uz)\\
\label{eq:pc_sysmsr_repair} \theta_{(x_0,y_0), \ell} \cxyznot + \sum\limits_{\substack{x \in [0,s-1]\\ x \ne x_0}} \gamma \theta_{(x,y_0), \ell}  \cxyzcnot &=& \kappa^*,
\eea
where $\kappa^*$ can be computed from known repair symbols. This is clear because $\cxyz$ is known for all $(x,y) \ne (x_0, y_0)$ and for any $y \ne y_0$ the symbols $\{ \cxyzc \mid x \in [0:s-1] \setminus \{z_y\} \}$ are also known as they are part of the helper information. From \eqref{eq:pc_sysmsr_repair}, it is clear to see that by varying $\ell \in [0:s-1]$ we get:
\bean
\scalebox{0.9}{$
\left[\begin{array}{ccccccc}
\gamma \theta_{(0,y_0), 0} & \cdots & \gamma \theta_{(x_0-1, y_0), 0} & \theta_{(x_0, y_0),0} & \gamma \theta_{(x_0+1, y_0),0} & \cdots & \gamma \theta_{(s-1, y_0),0}\\
\gamma \theta_{(0,y_0), 1} & \cdots & \gamma \theta_{(x_0-1, y_0), 1} & \theta_{(x_0, y_0),1} & \gamma \theta_{(x_0+1, y_0),1} & \cdots & \gamma \theta_{(s-1, y_0),1}\\
\vdots & \ddots & \vdots & \vdots & \vdots & \ddots & \vdots\\
\gamma \theta_{(0,y_0), s-1} & \cdots & \gamma \theta_{(x_0-1, y_0), s-1} & \theta_{(x_0, y_0),s-1} & \gamma \theta_{(x_0+1, y_0),s-1} & \cdots & \gamma \theta_{(s-1, y_0),s-1}
\end{array}\right] \left[ \begin{array}{c}
C(x_0,y_0; \uz(0 \rightarrow z_{y_0}))\\
C(x_0,y_0; \uz(1 \rightarrow z_{y_0}))\\
\vdots\\
C(x_0,y_0; \uz(s-1 \rightarrow z_{y_0}))\\
\end{array}\right] = \kappa^*.$}
\eean
Therefore we can recover symbols $\{\cxyzcnot \mid \forall x \in [0:s-1] \}$. By varying  $\uz \in \mathbb{Z}_s^p$ such that $z_{y_0} = x_0$ we can recover all the symbols of node $(x_0, y_0)$. \\ \ \\
\underline{MDS property:} In order to prove the MDS property it is enough to show that the code can recover from any $n-k$ erasures. Let ${E}$ be any erasure pattern corresponding to erasure of $n-k=s$ nodes, we will show that they can be recovered. If ${E} = \{(x,t-1) \mid x \in \Zs \}$, then the lost symbols are parity symbols, therefore they can be clearly recovered from the generator matrix. Let $P$ be the set of erased parity nodes and ${E} \setminus P$ be the remaining nodes. Let $|P|=(s-\mu)$ then $|E \setminus P| = \mu$ and let $\ell \in \{ x \in [0:s-1] \mid  (x,t-1) \notin P \}$ then
\bea
\sum\limits_{y = 0}^{t-2} \sum\limits_{x=0}^{s-1} \theta_{(x,y),\ell} \cxyz + \sum\limits_{y = 0}^{p-1} \sum\limits_{\substack{x \in [0:s-1] \\ x \ne z_y}} \gamma \theta_{(x,y),\ell} \cxyzc &=& C(\ell,t-1;\uz)\\
\label{eq:pc_mds_1}\sum\limits_{(x,y) \in {E} \setminus P}  \theta_{(x,y),\ell} \cxyz + \sum\limits_{\substack{y \in [0:p-1]\\ (z_y,y) \in {E}}} \sum\limits_{\substack{x\in [0:s-1] \\ x \ne z_y}} \gamma \theta_{(x,y),\ell} \cxyzc &=& \kappa^*
\eea
where $\kappa^*$ can be computed from unerased symbols. It is clear to see that if $|\{y \in [0,p-1]: (z_y,y) \in {E} \}| = 0$, then the $\mu$ symbols $\{\cxyz \mid (x,y) \in {E}\setminus P \}$ are the only unknowns in equation~\eqref{eq:pc_mds_1} and they can be recovered by using $\mu$ equations corresponding to $\ell \in \{ x \in [0:s-1] \mid (x, t-1) \notin P \}$. 
Let us define intersection score of an index $\uz \in \mathbb{Z}_s^p$ as:
\bean
\sigma(E, \uz) &=& |\{y \in [0:p-1]: (z_y,y) \in E\}| 
\eean
It is clear that the symbols $\{\cxyz \mid (x,y) \in {E} \setminus P, \uz \in \mathbb{Z}_s^p, \sigma(E, \uz) = 0 \}$ can be recovered. We will now show a sequential decoding algorithm where in the $j$-th step the symbols $\{ C(x,y;\uz) \mid \uz \in \mathbb{Z}_s^p, \sigma(E, \uz) = j\}$ are recovered. Let us assume that at the $j$-th step the erased symbols shown below are recovered
\bean
\{ C(x,y; \uz) \mid \uz \in \mathbb{Z}_s^p, \sigma(E, \uz) < j \},
\eean
we will now show that we can recover the following erased symbols:
\bean
\{ C(x,y; \uz) \mid \uz \in \mathbb{Z}_s^p, \sigma(E, \uz) = j \}.
\eean
%
Let $\sigma( E, \uz) = j$, $(z_{y_0}, y_0) \in E$ and let $\uw = \uz(x \rightarrow z_{y_0})$, where $x \ne z_{y_0}$, then:
\bean
\sigma({E}, \uw) &=& \sum\limits_{y=0}^{p-1} \mathbf{1}_{(w_{y},y) \in E} = \sum\limits_{y=0}^{p-1} \mathbf{1}_{(z_{y},y) \in E} -1 + \mathbf{1}_{(x,y_0) \in E}\\
&\le& \sigma(E, \uz)=j \text{ equality holds iff } (x,y_0) \in E.
\eean
Therefore the equation \eqref{eq:pc_mds_1}, for $(z_y, y) \in E$, $\cxyzc$ is already recovered iff $(x,y_0) \notin {E}$.  Therefore this equation simplifies to:
\bean
\sum\limits_{(x,y) \in {E} \setminus P}  \theta_{(x,y),\ell} \cxyz + \sum\limits_{\substack{y \in [0,p-1] \\ (z_y,y) \in {E}}} \sum\limits_{\substack{x\in [0,s-1], x \ne z_y \\ (x,y) \in {E}}} \gamma \theta_{(x,y),\ell} \cxyzc &=& \kappa^*.\\
\eean
Let $E_2(\uz) = \{(x,y) \in E \mid y \in [0:p-1], x \ne z_y, (z_y, y) \in {E} \}$, then the equation above further simplifies to:
\bean
\sum\limits_{(x,y) \in E_2(\uz)}  \theta_{(x,y),\ell} \left(\cxyz + \gamma \cxyzc\right) + \sum\limits_{(x,y) \in {E} \setminus (P \cup E_2(\uz))} \theta_{(x,y),\ell} \cxyz  &=& \kappa^*\\
\eean
By using $\mu$ equations corresponding to $\ell \in [0:s-1]$ such that $(\ell,t-1) \notin P$ we can recover $\mu$ symbols given below:
\bean
\{ \cxyz \mid (x,y) \in {E} \setminus \left( P \cup E_2(\uz) \right) \} \cup \{\cxyz + \gamma \cxyzc \mid (x,y) \in E_2(\uz) \}
\eean
At the end of step $j$, for any $\uz$ such that $\sigma(E, \uz) = j$ and $(x,y) \in E_2(\uz)$, we have access to $\cxyz + \gamma \cxyzc$ and $\cxyzc + \gamma \cxyz$ as $(z_y, y) \in E_2(\underline{z}(x \rightarrow z_y))$ and $\sigma(E, \uz(x \rightarrow z_y)) = j$. We can therefore recover the symbols $\cxyz, \cxyzc$ for $(x,y) \in E_{2}(\uz)$. Therefore at the end of step $j$, we have recovered:
\bean
\{\cxyz \mid (x,y) \in {E} \setminus P, \uz \in \mathbb{Z}_s^{p}, \sigma(E, \uz) = j \}. 
\eean
Therefore, by induction we have proved that the symbols:
\bean
\{\cxyz \mid  (x,y)\in E \setminus P, \uz \in \mathbb{Z}_s^p \} 
\eean
are recovered. The erased parity symbols in $P$ can be recovered from the remaining symbols.

\begin{note}
	Note that when $p=t-1$, this results in optimal access systematic MSR code with parameters $(n=st, k=s(t-1), d=n-1)$ that has optimal sub-packetization level $\alpha = s^{t-1}$.
\end{note}

\subsection{Support of Optimal Sub-Packetization Level MSR Construction}\label{sec:supp_msr}

The optimal-access MDS code defined by the parity check equations in equation~\eqref{eq:oamds_pc} can be extended to optimal access MSR construction with parameters $(n=st, k=s(t-1), d=n-1, \alpha = s^t)$ presented in \cite{YeBarg_2017, SasVajKum, LiTangTian} using the parity-checks shown below:
\bean
\label{eq:oamsr_pc} \sum\limits_{y=0}^{t-1} \sum\limits_{x=0}^{s-1} \theta_{(x,y), \ell} ( \cxyz + \mathbf{1}_{x \ne z_y} \gamma \cxyzc ) = 0 \text{ for all } \ell \in [0:s-1], \uz \in \Zst
\eean
where
\bean
\left[\begin{array}{cccc}
	\theta_{(0, 0),0} & \theta_{(1,0), 0} & \cdots & \theta_{(s-1,t-1),0}\\
	\theta_{(0, 0),1} & \theta_{(1,0), 1} & \cdots & \theta_{(s-1,t-1),1}\\
	 \vdots & \vdots & \ddots & \vdots\\
\theta_{(0, 0),s-1} & \theta_{(1,0), s-1} & \cdots &\theta_{(s-1,t-1),s-1}\\
\end{array} \right]
\eean
is a $(s \times st)$ parity-check matrix of $[n=st, k=s(t-1)]$ MDS code.

\bibliographystyle{IEEEtran}
\bibliography{msrjournal}

\begin{appendices}
\label{appendix}
\section{Row Spaces}

\begin{lem} \label{Row_Spaces}
	Let $A,B$ be nonsingular $(\alpha \times \alpha)$ matrices and $\{P_1,P_2,Q_1,Q_2\}$ be matrices of size $(m \times \alpha)$.  Then if 
	\bean
	<P_1A> = <P_2B>, & & <Q_1A> = <Q_2B> ,
	\eean
	we have 
	\bean
	\left(<P_1> \bigcap <Q_1> \right) A  & = & 
	\left(<P_2> \bigcap <Q_2> \right) B. 
	\eean
	
	\label{lem:row space} 
\end{lem}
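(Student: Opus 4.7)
The plan is to reduce the lemma to two elementary facts about the action of a nonsingular matrix on row spaces, and then chain them together.  The first fact is that for any $(m\times\alpha)$ matrix $M$ and any nonsingular $\alpha\times\alpha$ matrix $A$, the equality $<MA> = <M>A$ holds, where the right-hand side uses the notation defined earlier in the paper.  This is immediate: the rows of $MA$ are the rows of $M$ right-multiplied by $A$, so every element $\sum_i \lambda_i(m_i A) = \bigl(\sum_i\lambda_i m_i\bigr)A$ of $<MA>$ lies in $<M>A$, and conversely every element of $<M>A$ is visibly a row-space combination of $MA$.

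The second, only slightly more delicate, fact is that intersection commutes with right-multiplication by a nonsingular matrix: for any two subspaces $U,V\subseteq \fq^{\alpha}$,
\[ (U\cap V)A \ = \ UA \cap VA. \]
The forward inclusion is trivial.  For the reverse inclusion, if $x = uA = vA$ for some $u\in U$ and $v\in V$, then right-multiplying by $A^{-1}$ forces $u=v$, hence $u\in U\cap V$ and $x\in(U\cap V)A$.  This is the only place where nonsingularity of $A$ (or $B$) is actually used; without it the reverse inclusion can fail in general.

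Given these two facts, the lemma follows as a short chain of equalities.  Rewriting the hypotheses as $<P_1>A = <P_2>B$ and $<Q_1>A = <Q_2>B$, we compute
\[ (<P_1>\cap<Q_1>)A \ = \ <P_1>A \cap <Q_1>A \ = \ <P_2>B \cap <Q_2>B \ = \ (<P_2>\cap<Q_2>)B, \]
where the first and third equalities invoke the intersection-commutation identity applied to $A$ and to $B$ respectively, and the middle equality is just the hypothesis.  There is no real obstacle here; the only point worth highlighting is to isolate the intersection-commutation identity up front, so that the final computation is a one-line verification rather than an element-chasing tangle.
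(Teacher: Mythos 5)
Your proof is correct and follows essentially the same route as the paper: both reduce the lemma to the observation that, for nonsingular $A$, $(<P_1>\cap<Q_1>)A = <P_1A>\cap<Q_1A>$, then sandwich the hypothesis in the middle of a three-term chain. The only difference is that you spell out the reverse inclusion of the intersection-commutation identity (via $A^{-1}$), whereas the paper states it without proof; that extra detail is a modest improvement but does not change the structure of the argument.
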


\begin{proof} 
	For the case when $A$ is nonsingular, we have that: 
	\bean
	\left(<P_1> \bigcap <Q_1> \right) A  & = & <P_1A>  \ \bigcap \ <Q_1 A>  .
	\eean
	The result then follows from noting that: 
	\bean
	\left(<P_1> \bigcap <Q_1> \right) A & = & <P_1A> \bigcap <Q_1A> \\
	\ = \ <P_2B> \bigcap <Q_2B> & = & \left(<P_2> \bigcap <Q_2> \right) B. 
	\eean
\end{proof}
\section{Condition for Repair of a systematic Node} 

\begin{lem} \label{lem:rank_cond_p_node}
	Let the linear $\{ (n,k,d=n-1), (\alpha,\beta), B, \mathbb{F}_q \}$ MSR code $\mathcal{C}$ be encoded in systematic form, where nodes $\{u_1, \cdots, u_k\}$ are the systematic nodes and nodes $\{p_1, \cdots, p_r\}$ are the parity nodes. Let the corresponding generator matrix $G$ be given by: 
	\bea
	G = \left[
	\scalemath{1}{
		\begin{array}{cccc}
			I_{\alpha} & 0 & \hdots & 0 \\
			0 & I_{\alpha} & \hdots & 0 \\
			\vdots & \vdots & \vdots & \vdots \\
			0 & 0 & \hdots & I_{\alpha} \\
			A_{p_1,u_1} & A_{p_1,u_2} & \hdots & A_{p_1,u_k} \\
			A_{p_2,u_1} & A_{p_2,u_2} & \hdots & A_{p_2,u_k} \\
			\vdots & \vdots & \vdots & \vdots \\
			A_{p_r,u_1} & A_{p_r,u_2} & \hdots & A_{p_r,u_k} \\
		\end{array} 
	}
	\right], \ \ \label{G_syst}
	\eea
	Let $\left\{ S_{(i,u_1)} \mid i \in [n]-\{u_1\}\right\}$ be the repair matrices	associated to the repair of the systematic node $u_1$. Then we must have that for $j \in [k]$, $j \neq 1$:
	\bea
	<S_{(u_j,u_1)}> = <S_{(p_1,u_1)} A_{p_1,u_j}>= \hdots = <S_{(p_r,u_1)} A_{p_r,u_j}> \label{eq:IA} 
	\eea
	and
	\bea
	rank \left( \left[ \begin{array}{c}  S_{(p_1,u_1)} A_{p_1,u_1}  \\ \vdots \\ S_{(p_r,u_1)} A_{p_r,u_1} \end{array} \right] \right) = \alpha \label{eq:FR} .
	\eea
\end{lem}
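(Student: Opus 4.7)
The plan is to derive both \eqref{eq:IA} and \eqref{eq:FR} as necessary consequences of the requirement that the replacement of failed systematic node $u_1$ can linearly reconstruct $\uc{u_1}=\underline{m}_1$ from the minimum-bandwidth repair data. I would begin by writing out the data received at the replacement node. From systematic helper $u_j$ (with $j\neq 1$), the replacement receives $S_{(u_j,u_1)}\underline{m}_j$, which depends only on the interfering block $\underline{m}_j$. From parity helper $p_i$ it receives $S_{(p_i,u_1)} \sum_{j=1}^{k} A_{p_i,u_j}\,\underline{m}_j$, in which the desired $\underline{m}_1$ is mixed with all $k-1$ interferers.

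Next, a tight accounting of dimensions drives the argument. The total number of $\fq$-symbols received is $(n-1)\beta = r\beta + (k-1)\beta = \alpha + (k-1)\beta$: after carving off the $\alpha$ dimensions needed to recover $\underline{m}_1$, only $(k-1)\beta$ dimensions remain to absorb the interference contributed by the $(k-1)$ blocks $\underline{m}_2,\ldots,\underline{m}_k$, i.e.\ exactly $\beta$ per interfering block. WLOG each $S_{(i,u_1)}$ can be taken to have full row rank $\beta$, since any linearly dependent row would be redundant in the fixed download budget. Then $<S_{(u_j,u_1)}>$ has dimension $\beta$, and since the MDS property forces each $A_{p_i,u_j}$ to be nonsingular, the subspace $<S_{(p_i,u_1)} A_{p_i,u_j}>$ also has dimension $\beta$. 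For the interference from $\underline{m}_j$ appearing in the $p_i$-th parity equation to be cancellable using only $S_{(u_j,u_1)}\underline{m}_j$, we must have the containment $<S_{(p_i,u_1)} A_{p_i,u_j}>\subseteq <S_{(u_j,u_1)}>$, and matching $\beta$-dimensions then forces the equalities asserted in \eqref{eq:IA}.

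Finally, with the interference aligned as above, the replacement node can subtract appropriate linear combinations of the systematic-helper data from each parity-helper contribution to eliminate $\underline{m}_2,\ldots,\underline{m}_k$ completely. What remains is the residual system $S_{(p_i,u_1)} A_{p_i,u_1}\,\underline{m}_1$, $i=1,\ldots,r$, consisting of $r\beta=\alpha$ scalar equations in the $\alpha$ unknowns of $\underline{m}_1$; successful linear recovery of $\underline{m}_1$ then forces the stacked matrix to have full rank $\alpha$, which is precisely \eqref{eq:FR}. The main subtlety is the WLOG reduction to full-row-rank repair matrices, since this is what converts the subspace containment coming from cancellability into the equality required by \eqref{eq:IA}; once that reduction is justified by the minimality of the download budget, the remainder of the argument is a direct piece of linear-algebraic bookkeeping.
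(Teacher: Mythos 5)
Your proposal follows the same high-level strategy as the paper's proof---write out the repair data as a linear function of the message blocks, then analyze what linear recoverability of $\underline{m}_1$ forces on the repair matrices---but it has a genuine gap at the central step, and the logical ordering is inverted in a way that matters.

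The gap is the jump to the containment $<S_{(p_i,u_1)}A_{p_i,u_j}> \subseteq <S_{(u_j,u_1)}>$. You assert that for the $\underline{m}_j$-interference ``to be cancellable using only $S_{(u_j,u_1)}\underline{m}_j$'' this containment must hold, treating it as self-evident after the dimension budget $\alpha + (k-1)\beta$ has been ``carved up.'' But nothing a priori restricts cancellation to take that form, and the informal dimension split (``$\alpha$ for recovery, $\beta$ per interferer'') is not a rigorous accounting: the reconstruction map is a single $\alpha \times (n-1)\beta$ matrix $T = [T_{(u_2,u_1)},\ldots,T_{(u_k,u_1)},T_{(p_1,u_1)},\ldots,T_{(p_r,u_1)}]$ applied to the stacked data, and equating the $\underline{m}_j$-coefficient to zero only gives $T_{(u_j,u_1)}S_{(u_j,u_1)} + \sum_i T_{(p_i,u_1)}S_{(p_i,u_1)}A_{p_i,u_j} = 0$. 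To conclude that the $(r+1)\beta\times\alpha$ stacked matrix $\bigl[S_{(u_j,u_1)};\,S_{(p_1,u_1)}A_{p_1,u_j};\,\ldots;\,S_{(p_r,u_1)}A_{p_r,u_j}\bigr]$ has rank at most $\beta$ (whence the row spaces coincide), you must first know that $[T_{(p_1,u_1)},\ldots,T_{(p_r,u_1)}]$ is an invertible $\alpha\times\alpha$ matrix, so that the left null space of $[T_{(u_j,u_1)},T_{(p_1,u_1)},\ldots,T_{(p_r,u_1)}]$ has dimension exactly $\beta$. That invertibility comes from equating the $\underline{m}_1$-coefficient to $I$, which is precisely the computation that also yields \eqref{eq:FR}.

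This is why the paper proves \eqref{eq:FR} (together with the companion fact $\text{rank}[T_{(p_1,u_1)},\ldots,T_{(p_r,u_1)}]=\alpha$) \emph{first}, and only then derives \eqref{eq:IA}. Your proposal runs the other way: it proves \eqref{eq:IA} heuristically and then uses \eqref{eq:IA} to ``subtract off interference'' and obtain \eqref{eq:FR}. The second half of your argument is fine once interference alignment is in hand, but the first half cannot stand on its own. The fix is to introduce the reconstruction matrix $T$ explicitly, equate the coefficient of $\underline{m}_1$ to obtain \eqref{eq:FR} and the invertibility of the parity block of $T$, and then use that invertibility to bound the rank of the stacked $S$-matrix and deduce \eqref{eq:IA}. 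Your observation that each $S_{(i,u_1)}$ has full row rank $\beta$ (so that each block in the stack has rank $\beta$, forcing equality of row spaces once the stacked rank is $\leq \beta$) is needed and is correctly justified by minimality of the download.
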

\begin{proof}
	Let $[m_1,...,m_k]$ be the $k$ message symbols (each message symbol $m_{j}$ being a $(1 \times \alpha)$ row vector) encoded by the MSR code i.e., the resultant codeword is given by $G[m_1,...,m_k]^T$. 
	For the repair of systematic node $u_1$, the data collected by the replacement node is given by:
	\bean 
	\left[ \begin{array}{c} S_{(u_2,u_1)} \uc{u_2}^T  \\ \vdots \\ S_{(u_k,u_1)} \uc{u_k}^T \\ S_{(p_1,u_1)} \uc{p_1}^T \\ \vdots \\ S_{(p_r,u_1)} \uc{p_r}^T \end{array} \right] 
	& = & 
	\left[ \begin{array}{c} S_{(u_2,u_1)} m^T_{2}  \\ \vdots \\ S_{(u_k,u_1)} m^T_{k} \\ S_{(p_1,u_1)} \left( \sum_{j=1}^k A_{p_1,u_j}m^T_j  \right) \\ \vdots \\ S_{(p_r,u_1)} \left( \sum_{j=1}^k A_{p_r,u_j}m^T_j\right) \end{array} \right] .
	\eean
	Let 
	\bean
	T_{u_1} & = \left[ \begin{array}{cccccc}T_{(u_2,u_1)}, & \cdots & T_{(u_k,u_1)}, & T_{(p_1,u_1)}, & \cdots & T_{(p_r,u_1)} \end{array} \right] 
	\eean
	be the $(\alpha \times (n-1)\beta)$ matrix used to derive the contents of the replacement node $u_1$, with each $T_{(i, j)}$ being an $(\alpha \times \beta)$ submatrix.  Then we must have that 
	\bean
	\left[ \begin{array}{cccccc}T_{(u_2,u_1)}, & \cdots & T_{(u_k,u_1)}, & T_{(p_1,u_1)}, & \cdots & T_{(p_r,u_1)} \end{array} \right] 	\left[ \begin{array}{c} S_{(u_2,u_1)} m^T_{2},  \\ \vdots \\ S_{(u_k,u_1)} m^T_{k} \\ S_{(p_1,u_1)} \left( \sum_{j=1}^k A_{p_1,u_j}m^T_j \right) \\ \vdots \\ S_{(p_r,u_1)} \left( \sum_{j=1}^k A_{p_r,u_j}m^T_j \right) \end{array} \right]
	& = & m^T_1 .
	\eean
	Since this must hold for all data vectors $m^T_j$, we can equate the matrices that premultiply  $m^T_j$ on the left, on both sides.  If we carry this out for $j=1$, we will obtain that: 
	\bean
	\sum_{i=1}^r T_{(p_i,u_1)}S_{(p_i,u_1)}A_{p_i,u_1} & = & I ,
	\eean
	which implies:
	\bean
	\left[ \begin{array}{ccc} T_{(p_1,u_1)} & \cdots & T_{(p_r,u_1)} \end{array} \right] \left[ \begin{array}{c}  S_{(p_1,u_1)} A_{p_1,u_1}  \\ \vdots \\ S_{(p_r,u_1)} A_{p_r,u_1} \end{array} \right]
	& = & I , 
	\eean
	which in turn, forces:
	\bea
	\text{rank} \left( \left[ \begin{array}{c}  S_{(p_1,u_1)} A_{p_1,u_1}  \\ \vdots \\ S_{(p_r,u_1)} A_{p_r,u_1} \end{array} \right] \right) = \alpha ,
	\eea
	and
	\bea
	\text{rank} \left( \left[ \begin{array}{ccc} T_{(p_1,u_1)} & \cdots & T_{(p_r,u_1)} \end{array} \right] \right) = \alpha . \label{Temp:FR}
	\eea
	The above equation proves \eqref{eq:FR}.
	For the case $j \neq 1$:
	\bean
	T_{(u_j,u_1)}S_{(u_j,u_1)} + \sum_{i=1}^r T_{(p_i,u_1)}S_{(p_i,u_1)}A_{p_i,u_j} & = & 0, 
	\eean
	and
	\bea
	\left[ \begin{array}{cccc}T_{(u_j,u_1)}, & T_{(p_1,u_1)}, & \cdots & T_{(p_r,u_1)} \end{array} \right] \left[ \begin{array}{c} S_{(u_j,u_1)} \\ S_{(p_1,u_1)} A_{p_1,u_j}  \\ \vdots \\ S_{(p_r,u_1)} A_{p_r,u_j}, \end{array} \right]
	& = & 0. \label{eq:IA_1}
	\eea
	Equations \eqref{eq:IA_1} and \eqref{Temp:FR} imply:
	\bea
	\text{rank} \left( \left[ \begin{array}{c} S_{(u_j,u_1)} \\ S_{(p_1,u_1)} A_{p_1,u_j}  \\ \vdots \\ S_{(p_r,u_1)} A_{p_r,u_j}, \end{array} \right] \right) \leq \beta. 
	\eea
	Given that $\text{rank}(S_{(p_i,u_1)} A_{p_i,u_j})=\beta$ for all $i \in [r]$ and $\text{rank}(S_{(u_j,u_1)})= \beta$, this implies
	\bea
	<S_{(u_j,u_1)}> = <S_{(p_1,u_1)} A_{p_1,u_j}>= \hdots = <S_{(p_r,u_1)} A_{p_r,u_j}> \label{IA:1}. 
	\eea
	The above equation proves \eqref{eq:IA}.
\end{proof}

\end{appendices}
\end{document}